\newfont{\mycrnotice}{ptmr8t at 7pt}
\newfont{\myconfname}{ptmri8t at 7pt}
\begin{document}

\ResetInOut{I}

\newtheorem{theo}{Theorem}[section]
\newtheorem{lem}[theo]{Lemma}
\newtheorem{prop}[theo]{Proposition}
\newtheorem{cor}[theo]{Corollary}
\newtheorem{conj}{Conjecture}
\newtheorem{quest}[theo]{Question}

\theoremstyle{definition}
\newtheorem{defn}[theo]{Definition}
\newtheorem{expl}[theo]{Example}

\theoremstyle{remark}
\newtheorem{rmk}[theo]{Remark}
\newtheorem{note}[theo]{Note}

%
% --- Author Metadata here ---
%\conferenceinfo{ISSAC}{}

% --- End of Author Metadata ---

\title{On the p-adic stability of the FGLM algorithm}

%
% You need the command \numberofauthors to handle the 'placement
% and alignment' of the authors beneath the title.
%
% For aesthetic reasons, we recommend 'three authors at a time'
% i.e. three 'name/affiliation blocks' be placed beneath the title.
%
% NOTE: You are NOT restricted in how many 'rows' of
% "name/affiliations" may appear. We just ask that you restrict
% the number of 'columns' to three.
%
% Because of the available 'opening page real-estate'
% we ask you to refrain from putting more than six authors
% (two rows with three columns) beneath the article title.
% More than six makes the first-page appear very cluttered indeed.
%
% Use the \alignauthor commands to handle the names
% and affiliations for an 'aesthetic maximum' of six authors.
% Add names, affiliations, addresses for
% the seventh etc. author(s) as the argument for the
% \additionalauthors command.
% These 'additional authors' will be output/set for you
% without further effort on your part as the last section in
% the body of your article BEFORE References or any Appendices.

\numberofauthors{2}
% You can go ahead and credit any number of authors here,
% e.g. one 'row of three' or two rows (consisting of one row of three
% and a second row of one, two or three).
%
% The command \alignauthor (no curly braces needed) should
% precede each author name, affiliation/snail-mail address and
% e-mail address. Additionally, tag each line of
% affiliation/address with \affaddr, and tag the
% e-mail address with \email.
%
% 1st. author
\author{\alignauthor
Guénaël Renault\\
       \affaddr{PolSys project INRIA Paris-Rocquencourt,}
       \affaddr{UPMC Univ. Paris 06, CNRS, UMR 7606, LIP6,}
       \affaddr{Paris, France}
       \email{guenael.renault@lip6.fr}
\alignauthor
Tristan Vaccon\\
       \affaddr{JSPS--Rikkyo University} \\
       \affaddr{Tokyo, Japan}
       \email{vaccon@rikkyo.ac.jp}
}

\maketitle
\begin{abstract} 

Nowadays, many strategies to solve polynomial systems use the computation of a Gröbner basis for the graded reverse lexicographical ordering, followed by a change of ordering algorithm to obtain a Gröbner basis for the lexicographical ordering.
The change of ordering algorithm is crucial for these strategies.
We study the $p$-adic stability of the main change of ordering algorithm, FGLM.

We show that FGLM is stable and give explicit upper bound on the loss of precision occuring in its execution.
The variant of FGLM designed to pass from the grevlex ordering to a Gröbner basis in shape position 
is also stable.

Our study relies on the application of Smith Normal Form computations for linear algebra.

%Ce chapitre est consacré à l'étude sur $K$ un CDVF à précision fini,
%de l'algorithme FGLM. Il s'agit d'un travail en commun avec Guénaël Renault.  Nous
%montrons, dans la Section \ref{section: stabilite FGLM}, qu'étant
%donnée une base de Gröbner approchée d'un idéal $I$ de dimension zéro
%pour un ordre monomial $\leq$, l'algorithme FGLM peut être utilisé à
%précision finie pour calculer une base de Gröbner approchée de $I$
%pour un autre ordre monomial, $\leq_2$, et la précision nécessaire est
%donnée explicitement par la matrice de changement de base de $A/I$
%entre les bases canoniques pour $\leq$ et $\leq_2$.  Dans la Section
%\ref{section: stabilite shape}, nous montrons que les variantes plus
%rapides lorsqu'on est dans un cas générique et que l'on part de
%grevlex ou va vers lex sont aussi disponibles, avec même condition sur
%la précision.

\end{abstract}

% A category with the (minimum) three required fields
\category{I.1.2}{Computing Methodologies}{Symbolic and Algebraic Manipulations}[Algebraic Algorithms]

\terms{Algorithms, Theory}

\keywords{Gröbner bases, FGLM algorithm, $p$-adic precision, $p$-adic algorithm, Smith Normal Form}

%\textbf{ATTENTION : Est-ce assez ?}
%
%Mettre les énoncés de deux résultats principaux, pour le cas général et pour la position générale ?

\section{Introduction}
%% \begin{itemize}
%% \item importance de la resolution, appliquee p adic, aussi interet pour comprendre le calcul en general
%% \item light resultats
%% \item Rappel GrobnerPadic
%% \item Rappel FGLM, en trois points
%% \item Problematique : impossibilite preuve dependance lineaire ; idee clef : stabilite de l'algo et on s en sort car on connait la dimension
%% \item Problematique : gestion de la precision dans l'algebre lineaire
%% \item idee clef : pour une meilleure maitrise de la precision on remplace alg lin par Smith. 
%% \item Rappel Calcul de la forme normale de Smith (non iteree)
%% \item Le caractere iteratif de FGLM
%% \item faire de meme pour Smith.
%% \end{itemize}

%% \subsection{On $p$-adic Gröbner bases}
The advent of arithmetic geometry has seen the emergence of questions that are
purely local (\textit{i.e.} where a prime $p$ is fixed at the very beginning and
one can not vary it). As an example, one can cite the work of Caruso and Lubicz
\cite{Caruso:2014} who gave an algorithm to compute lattices in some $p$-adic
Galois representations. A related question is the study of $p$-adic deformation
spaces of Galois representations. Since the work of Taylor and Wiles
\cite{Taylor:1995}, one knows that these spaces play a crucial role in many
questions in number theory. Being able to compute such spaces appears then as an
interesting question of experimental mathematics and require the use of purely
$p$-adic Gröbner bases and more generally $p$-adic polynomial system solving.

Since \cite{Vaccon:2014}, it is possible to compute a Gröbner basis, under some
genericness assumptions, for a monomial ordering $\omega$ of an ideal generated
by a polynomial sequence $F=(f_1,\dots,f_s) \subset \mathbb{Q}_p [X_1, \dots,
  X_n]$ if the coefficients of the $f_i$'s are given with enough initial precision.
Unfortunately, one of the genericness assumptions (namely, the sequence
$(f_1,\dots,f_i)$ has to be weakly-$\omega$) is at most generic for the graduate
reverse lexicographical (denoted grevlex in the sequel) ordering (conjecture of
Moreno-Socias). Moreover, in the case of the lexicographical ordering (denoted
lex in the sequel) this statement is proved to be generically not satisfied for
some choices of degrees of the entry polynomials. In the context of polynomial
system solving, where the lex plays an important role, this fact becomes a
challenging problem that is essential to overcome.

Thus, in this paper, we focus on the fundamental problem of change of ordering
for a given $p$-adic Gröbner basis. In particular we provide precise results in
the case where the input basis has a grevlex ordering and one wants to compute
the lex basis corresponding. We will use the following notations.

\subsection{Notations}
%\textbf{TO DO: introduce notations}
%\textbf{TO DO: check consistency of notations}
Throughout this paper, $K$ is a field with a discrete valuation $val$ such that
$K$ is complete with respect to the norm defined by $val$. We denote by $R=O_K$
its ring of integers, $m_K$ its maximal ideal and $k=O_K/m_K$ its fraction
field. We denote by CDVF (complete discrete-valuation field) such a field. We
refer to Serre's Local Fields \cite{Serre:1979} for an introduction to such
fields. Let $\pi \in R$ be a uniformizer for $K$ and let $S_K \subset R$ be a
system of representatives of $k=O_K/m_K.$ All numbers of $K$ can be written
uniquely under its $\pi$-adic power series development form: $\sum_{k \geq l}
a_k \pi^l$ for some $ l \in \mathbb{Z}$, $a_k \in S_K$.

The case that we are interested in is when $K$ might not be an effective field,
but $k$ is (\textit{i.e.} there are constructive procedures for performing
rational operations in $k$ and for deciding whether or not two elements in $k$
are equal). Symbolic computation can then be performed on truncation of
$\pi$-adic power series development. We will denote by finite-precision CDVF
such a field, and finite-precision CDVR for its ring of integers. Classical
examples of such CDVF are $K = \mathbb{Q}_p$, with $p$-adic valuation, and
$\mathbb{Q}[[X]]$ or $\mathbb{F}_q[[X]]$ with $X$-adic valuation. We assume that
$K$ is such a finite-precision CDVF.

The polynomial ring $K[X_1,\dots, X_n]$ will be denoted $A$, and
%% $w$ a monomial order on $A$.
$u=(u_1,\dots,u_n) \in \mathbb{Z}_{\geq 0}^n$, we write $x^u$ for
$X_1^{u_1} \dots X_n^{u_n}.$

\subsection{Mains results}

In the context of $p$-adic algorithmic, one of the most important
behavior to study is the stability of computation: how the quality of
the result, in terms of $p$-adic precision, evolves from the input. To
quantify such a quality, it is usual to use an invariant, called
condition number, related to the computation under study. Thus, we
define such an invariant for the change of ordering.

\begin{defn} \label{defn cond FGLM} Let $I \subset A$ be a
  zero-dimensional ideal. Let $\leq_1$ and $\leq_2$ be two monomial
  orderings on $A$. Let $B_{\leq_1}$ and $B_{\leq_2}$ be the canonical
  bases of $A/I$ for $\leq_1 $ and $\leq_2$. Let $M$ be the matrix whose
  columns are the $NF_{\leq_1} (x^\beta)$ for $x^\beta \in B_{\leq_2}$. We
  define the condition number of $I$ for $\leq_1$ to $\leq_2$, with
  notation $cond_{\leq_1, \leq_2}(I)$ (or $cond_{\leq_1, \leq_2}$ when
  there is no ambiguity) as the biggest valuation of an invariant
  factor of $M$.
\end{defn}
%\guena{definir invariant factor}

We can now state our main result on change of ordering of $p$-adic
Gröbner basis.

\begin{theo} \label{thm stabilite fglm} Let $\leq_1$ and $\leq_2$ be two
  monomial orderings. Let $G=(g_1,\dots,g_t) \in K[X_1,\dots, X_n]^t$ be
  an approximate reduced Gröbner basis for $\leq_1$ of the ideal $I$
  it generates, with $\dim I = 0$ and $\deg I = \delta$, and with
  coefficients known up to precision $O(\pi^N)$. 
  Let $\beta$ be the smallest valuation of a coefficient in $G.$    
  Then, if   $N>cond_{\leq_1, \leq_2}(I)$, the stabilized FGLM 
  Algorithm, Algorithm \ref{FGLM stabilise}, computes a Gröbner basis $G_2$ 
  of $I$ for $\leq_2$. The   coefficients of the polynomials of 
  $G_2$ are known up to precision
  $N+n^2(\delta+1)^2 \beta-2cond_{\leq_1, \leq_2}$. 
  The time-complexity is in $O(n \delta^3)$.
\end{theo}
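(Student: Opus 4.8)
The plan is to reduce the statement to the correctness of classical FGLM together with a precision analysis split into two phases, the linchpin being the Smith normal form of (submatrices of) the change-of-basis matrix $M$ of Definition~\ref{defn cond FGLM}. Recall that FGLM first builds, from the reduced $\leq_1$-Gröbner basis $G$, the multiplication matrices $M_{X_1},\dots,M_{X_n}$ of $A/I$ in the basis $B_{\leq_1}$, and then enumerates monomials in increasing $\leq_2$-order: for the current $x^\gamma$ it forms $v=NF_{\leq_1}(x^\gamma)$ by a single product $M_{X_i}\cdot NF_{\leq_1}(x^\gamma/X_i)$ from a previously computed normal form, and tests whether $v$ is a linear combination of the normal forms of the monomials already in the $\leq_2$-staircase; a relation produces an element of $G_2$, independence enlarges the staircase. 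The stabilized algorithm performs exactly these steps but realizes each linear-algebra subtask — the span-membership tests and the final recovery of the coefficients of the elements of $G_2$ — through a Smith normal form $M_S=UDV$ ($U,V\in GL(R)$, $D$ diagonal) of the relevant column-submatrix $M_S$ of $M$, propagating precisions by the usual $p$-adic rules. So the first thing I would prove is that under $N>cond_{\leq_1,\leq_2}(I)$ every membership test is decided correctly: it amounts to checking whether a residual $U^{-1}v$ has a prescribed block of coordinates equal to $0$, and I would bound the valuations of the minors governing this by the largest invariant-factor valuation of $M$, namely $cond_{\leq_1,\leq_2}(I)$; then $N>cond_{\leq_1,\leq_2}(I)$ makes any genuinely nonzero residual coordinate visible at the working precision, so the combinatorial output (the $\leq_2$-staircase and the support of $G_2$) matches the exact run and correctness follows from classical FGLM.

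Next I would track the precision through the construction of $M$ and of the normal-form vectors. Every coefficient of $G$ has valuation $\geq\beta$, and forming an entry of an $M_{X_i}$ amounts to reducing a border monomial to normal form; each reduction step only multiplies by monomials and subtracts a (unit or bounded-valuation) multiple of a coefficient of $G$, and there are $O(n\delta)$ such steps, so the entries of the $M_{X_i}$ have valuation $\geq O(n\delta)\beta$ and are known to precision $\geq N+O(n\delta)\beta$. Each normal-form vector FGLM needs — the columns of $M$ and the right-hand sides used to produce $G_2$ — is obtained from $e_1$ by at most $O(n\delta)$ successive products with the $M_{X_i}$'s, and an induction on the number of products shows that after $k$ products the vector has valuation $\geq k\cdot O(n\delta)\beta$ and precision $\geq N+(k\cdot O(n\delta)-1)\beta$. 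Taking $k\leq O(n\delta)$ and collecting constants gives the (crude) bound that all data fed to the linear-algebra phase are known to precision at least $N+n^2(\delta+1)^2\beta$.

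Then I would analyze the Smith-normal-form solve. Each element of $G_2$ has the shape $x^\gamma-\sum_{x^\beta\in B_{\leq_2}}c_\beta x^\beta$, where $c$ solves $Mc=b$ with $b=NF_{\leq_1}(x^\gamma)$ for the $\leq_2$-leading term $x^\gamma$; writing $M=UDV$ with $D=\mathrm{diag}(\pi^{d_1},\dots,\pi^{d_\delta})$ and $\max_i d_i=cond_{\leq_1,\leq_2}$, the inverse $M^{-1}=V^{-1}D^{-1}U^{-1}$ has entries of valuation $\geq-cond_{\leq_1,\leq_2}$. Replacing $b$ by its approximation contributes $M^{-1}(\tilde b-b)=O(\pi^{N+n^2(\delta+1)^2\beta-cond_{\leq_1,\leq_2}})$, a loss of $cond_{\leq_1,\leq_2}$; replacing $M$ (hence its Smith form) by the computed one contributes $(\tilde M^{-1}-M^{-1})b=-M^{-1}(\tilde M-M)M^{-1}b+\cdots$, of size $O(\pi^{N+n^2(\delta+1)^2\beta-2cond_{\leq_1,\leq_2}})$, a loss of $2cond_{\leq_1,\leq_2}$, which dominates; the hypothesis on $N$ is what guarantees convergence of these Neumann expansions. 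This yields the announced precision $N+n^2(\delta+1)^2\beta-2cond_{\leq_1,\leq_2}$ on the coefficients of $G_2$. For the complexity I would just observe that the stabilized algorithm performs the same $O(n\delta^3)$ arithmetic operations as classical FGLM plus $O(n)$ Smith-normal-form computations on $\delta\times\delta$ matrices, each $O(\delta^3)$, so the total is still $O(n\delta^3)$.

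The step I expect to be the main obstacle is the correctness part: showing that $N>cond_{\leq_1,\leq_2}(I)$ genuinely suffices to make every span-membership test reliable — that is, that the minors (equivalently the pivot valuations) controlling these on-the-fly decisions are bounded by $cond_{\leq_1,\leq_2}(I)$ and not by some larger quantity such as $\delta\cdot cond_{\leq_1,\leq_2}(I)$ — which is precisely why the algorithm must route all linear algebra through a single Smith normal form of submatrices of $M$ rather than through naive incremental pivoting. The $2cond_{\leq_1,\leq_2}$ bookkeeping for the perturbed inverse $M^{-1}$ is the other delicate point; the remainder, in particular the accounting behind the $n^2(\delta+1)^2\beta$ term, is tedious but routine.
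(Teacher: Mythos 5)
Your plan follows the paper's general strategy (SNF-based linear algebra, the condition number of the change-of-basis matrix $M$, valuation bookkeeping giving the $n^2(\delta+1)^2\beta$ term, a $2\,cond$ loss at the solves, $O(n\delta^3)$ operations), but the step you explicitly defer as ``the main obstacle'' is precisely the paper's load-bearing lemma, and it is not proved in your proposal. Namely, Lemma~\ref{lem snf iteree}: adjoining a column to a full-rank matrix cannot decrease the largest invariant-factor valuation, proved via the characterization of invariant factors as differences of minimal valuations of minors. All the quantitative claims in your sketch rest on it: it is what bounds by $cond_{\leq_1,\leq_2}(I)$ the invariant factors of every partial matrix $\mathbf{v}$ and of $\mathbf{v}$ augmented by the new column, hence what makes the residual coordinates $\lambda_{s+1},\dots,\lambda_\delta$ meaningful (Lemma~\ref{lem ecriture avec snf dans fglm}) and what gives the $2\,cond$ loss for the on-the-fly rectangular solves $\mathbf{v}W=v$ (Proposition~\ref{prop sys lin pas carre}). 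Note also that the algorithm never solves the square system $Mc=b$ with the full matrix $M$ (which is only available at the very end); relations are extracted incrementally from the partial $\mathbf{v}$, so your Neumann-series perturbation of $M^{-1}$ must in any case be replaced by the submatrix statement, which again needs the monotonicity lemma.

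Two further points. First, your correctness reduction --- ``under $N>cond_{\leq_1,\leq_2}(I)$ every membership test is decided correctly, so the run coincides with exact FGLM'' --- is not what the hypothesis buys: at test time the vector $v$ is only known at the precision degraded by the multiplication-matrix products (your own $n^2(\delta+1)^2\beta$ accounting), so an independent monomial can be silently misclassified as dependent. The paper does not prove two-sided correctness of the tests; it proves the one-sided loop invariant $B_2\subseteq B_{\leq_2}$ (border monomials are always correctly detected as dependent) and certifies the output \emph{a posteriori} by the dimension count $\mathrm{card}(B_2)=\delta$, returning Error otherwise (Proposition~\ref{prop correction et terminaison pour FGLM stabilise}); this dimension-count certification is the structural idea highlighted in the introduction and missing from your proposal. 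Second, the complexity count ``$O(n)$ SNF computations on $\delta\times\delta$ matrices'' is incorrect: the SNF must be updated each of the $\delta$ times the staircase grows, and recomputing it from scratch costs $O(\delta^4)$, which is not $O(n\delta^3)$ in general since typically $\delta\gg n$. The stated bound requires the iterated approximate SNF update (Algorithm~\ref{Update FGLM stabilise} together with Lemma~\ref{lem iteraton snf}), which costs only $O(\delta^2)$ per update.
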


In the case of a change of ordering from grevlex to lex, we provide a more
precise complexity result:

\begin{theo} \label{thm:FGLMp:grevlex:lex} With the same notations and
  hypothesis as in Theorem~\ref{thm stabilite fglm}. If $\leq_1$, $\leq_2$ are
  respectively instantiated to grevlex and lex, and if we assume that the ideal
  $I$ is in shape position.
  % Let $(g_1,\dots,g_t) \in K[X_1,\dots, X_n]^t$ be an approximate
  % reduced Gröbner basis for grevlex of the ideal $I$ it generates,
  % with $\dim I = 0$ and $\deg I = \delta$, and with coefficients known
  % up to precision $O(\pi^N).$ We assume the ideal $I$ is in shape
  % position. Then, if $N>cond_{grevlex, lex}(I)$
  Then, the adapted FGLM Algorithm for general position, Algorithm 
  \ref{FGLM stabilise avec shape depuis grevlex}, computes a
  Gröbner basis $G_2$ of $I$ for lex, in shape position. The
  coefficients of the polynomials of $G_2$ are known up to precision
  $N+\beta \delta-2cond_{\leq_1, \leq_2}$. The time-complexity is in
  $O(n\delta^2)+O(\delta^3)$.
\end{theo}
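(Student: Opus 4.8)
The plan is to specialize the general machinery of Theorem~\ref{thm stabilite fglm} to the grevlex-to-lex case, exploiting the shape position hypothesis to sharpen both the precision loss and the complexity. First I would recall that when $I$ is in shape position its lex Gröbner basis has the form $g_1 = X_n^\delta + \dots$, $g_2 = X_{n-1} - h_{n-1}(X_n)$, $\dots$, $g_n = X_1 - h_1(X_n)$, so the only data to reconstruct are the $\delta$ coefficients of the minimal polynomial of $X_n$ together with $n-1$ polynomials of degree $< \delta$ in $X_n$. The key structural observation is that Algorithm~\ref{FGLM stabilise avec shape depuis grevlex} reduces to two linear-algebra tasks over $A/I \cong K^\delta$ (with the grevlex basis $B_{\leq_1}$ as coordinates): (i) compute the matrix $T$ of multiplication by $X_n$ on $A/I$, and then its successive powers applied to $1$, to detect the linear dependency giving $g_1$; and (ii) solve, against the same Krylov-type matrix, for the coordinates of $X_i$ to get each $h_i$. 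Both tasks use the \emph{same} $\delta \times \delta$ matrix, which is exactly (up to column reindexing) the matrix $M$ of Definition~\ref{defn cond FGLM}; this is what makes a single invocation of the Smith Normal Form argument suffice.

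Next I would do the precision bookkeeping. Following the proof of Theorem~\ref{thm stabilite fglm}, solving a linear system $Mx = b$ over a CDVR via the Smith Normal Form $M = U D V$ costs, in precision, twice the largest valuation of an invariant factor of $M$ (once for applying $U^{-1}$-type operations implicitly and once for dividing by the diagonal), i.e.\ $2\,cond_{\leq_1,\leq_2}$; the right-hand sides $b$ here are the grevlex normal forms of $1, X_n, X_n^2, \dots$ and of the $X_i$, whose entries have valuation bounded below by $\beta$. The improvement over the general bound $n^2(\delta+1)^2\beta$ comes from the fact that in shape position we never iterate the multiplication matrix more than $\delta$ times and never compose $n$ such maps: a single multiplication-by-$X_n$ matrix has entries of valuation $\geq -C$ for a constant controlled by $\beta$, and the Krylov vectors $T^j \cdot e_1$ accumulate valuation loss at most linearly, yielding the clean term $\beta\delta$. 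So the output precision is $N + \beta\delta - 2\,cond_{\leq_1,\leq_2}$, and the hypothesis $N > cond_{\leq_1,\leq_2}$ (inherited from Theorem~\ref{thm stabilite fglm}) guarantees this is positive enough for the SNF computation to be meaningful and for the dependency detection to be certified.

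For the complexity, I would argue: building the multiplication matrix $T$ from the grevlex Gröbner basis $G$ is the classical FGLM cost, which in the shape/favourable case is $O(n\delta^2)$ (one sparse matrix–vector product per new monomial, $\delta$ monomials, each product $O(n\delta)$ when the staircase is handled incrementally). Then computing $g_1$ requires the Krylov sequence $e_1, T e_1, \dots, T^\delta e_1$ — that is $\delta$ matrix–vector products, hence $O(\delta^3)$ — together with one Smith Normal Form / linear solve of size $\delta$, also $O(\delta^3)$. Recovering the $n-1$ polynomials $h_i$ is then $n-1$ back-substitutions against the already-triangularized system, cheaper than $O(\delta^3)$. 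Summing gives $O(n\delta^2) + O(\delta^3)$. The main obstacle I anticipate is the precision analysis of the Krylov iteration: one must show that repeatedly multiplying by $T$ (whose inverse we implicitly need when $T$ is not a unit, i.e.\ when $X_n$ is a zero divisor issue is ruled out by zero-dimensionality but small invariant factors still appear) does not lose more than $O(\beta\delta)$ digits, and that the stopping criterion for the linear dependency is still decidable at the reduced precision; this is exactly where the condition number $cond_{\leq_1,\leq_2}$ must be shown to dominate all intermediate valuation losses, and it requires tracking how the invariant factors of the $j$-th Krylov submatrix relate to those of the full matrix $M$.
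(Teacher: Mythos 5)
Your overall route is the paper's route in outline: form the Krylov matrix $M$ whose columns are the grevlex normal forms of $1,x_n,\dots,x_n^{\delta-1}$ (which, by shape position, is exactly the matrix of Definition~\ref{defn cond FGLM}), compute one SNF of it, solve against $NF_{\leq_1}(x_n^{\delta})$ for $h_n$ and against the normal forms of the $x_i$ for the $h_i$, and charge $2\,cond_{\leq_1,\leq_2}$ for the solves via Theorem~\ref{thm sys lin}. The genuine gap is in how you obtain the multiplication data. You assert that building $T$ from $G$ is ``the classical FGLM cost, $O(n\delta^2)$'' and that its entries have valuation controlled by $\beta$ at full precision; but the sharpened bound $N+\beta\delta-2\,cond_{\leq_1,\leq_2}$ is only valid because \emph{no arithmetic at all} is performed to get $T_n$ and the vectors $T_i\cdot 1$: by Proposition~\ref{prop:Huot} (semi-stability for $x_n$, which holds after a generic change of variables and is explicitly assumed in Proposition~\ref{prop: prec of shape from grevlex}), they are read off the reduced grevlex basis $G_1$, hence known at precision exactly $O(\pi^N)$ with entries of valuation at least $\beta$. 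If $T_n$ were instead computed by normal-form reductions as in Algorithm~\ref{calcul des matrices de multiplication}, Lemma~\ref{lem:val_of_Ti} only guarantees valuation $n\delta\beta$ and the associated precision degradation, which reintroduces the $n^2(\delta+1)^2\beta$-type loss of Theorem~\ref{thm stabilite fglm} and invalidates both the claimed precision and the claimed $O(n\delta^2)+O(\delta^3)$ complexity (in the paper the $O(n\delta^2)$ term comes from the $n-1$ solves with the precomputed SNF, not from building multiplication matrices). Your proposal neither invokes semi-stability nor supplies an alternative reason why $T_n$ and $NF_{\leq_1}(x_i)$ come for free, so the key quantitative claims are not established.

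A secondary remark: the ``main obstacle'' you anticipate --- tracking how the invariant factors of the $j$-th Krylov submatrix relate to those of $M$ --- does not occur in the paper's proof. Because shape position is assumed, Algorithm~\ref{FGLM stabilise avec shape depuis grevlex} builds all $\delta$ Krylov vectors up front, certifies correctness by checking $rank(M)=\delta$ from the single SNF, and never performs incremental dependency detection; the iterated-SNF analysis (Lemma~\ref{lem snf iteree}) is only needed for the general Algorithm~\ref{FGLM stabilise}. Organizing the shape-position proof around incremental detection, as you suggest, would force you to reprove that monotonicity for no benefit.
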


In order to obtain these results, one has to tackle technical problems related
to the core of the FGLM algorithm. Thus, we first present a summary of some
important facts on this algorithm. Then we present more precisely the underlying
problems in the $p$-adic situation.

\subsection{The FGLM algorithm}

For a given zero-dimensional $I$ in a polynomial ring ${A}$, the FGLM
algorithm \cite{Faugere:1993} is mainly based on computational linear algebra
in ${A}/I$. It allows to compute a Gröbner basis $G_2$ of $I$ for a
monomial ordering $\leq_2$ starting from a Gröbner basis $G_1$ of $I$ for a
first monomial ordering $\leq_1$. To solve polynomial systems, one possible
method is the computation of a Gröbner basis for lex. However, computing a
Gröbner basis for lex by a direct approach is usually very time-consuming. The
main application of the FGLM algorithm is to allow the computation of a Gröbner
basis for lex by computing a Gröbner basis for grevlex then by applying a change
of ordering to lex. The superiority of this approach is mainly due to the fact
that the degrees of the intermediate objects are well controlled during the
computation of the grevlex Gröbner basis. The second step of this general method
for polynomial system solving, is what we call the FGLM algorithm. Many variants
and improvements (in special cases) of the FGLM algorithm have been published,
\textit{e.g.} Faugère and Mou in
\cite{Faugere-Mou:2011,Faugere-Mou:2013,Mou:2013} and Faugère, Gaudry, Huot and
Renault \cite{Faugere-Huot:2013,Faugere:2014,Huot:13} takes advantages of sparse
linear algebra and fast algorithm in linear algebra to obtain efficient
algorithms. In this paper, as a first study of the problem of loss of precision
in a change of ordering algorithm, we follow the original algorithm. This study
already brings to light some problems for the loss of precision and propose
solutions to overcome them. Thus, the FGLM algorithm we consider can be sketched
as follows:

\begin{enumerate}
\item Order the images in ${A}/I$ of the monomials of ${A}$
  according to $\leq_2$.
\item \label{intro:step:independence} Starting from the first monomial, test the
  linear independence in ${A}/I$ of a monomial $x^\alpha$ with the
  $x^\beta$ smaller than it for $\leq_2$.
\item In case of independence, $x^\alpha$ is added to the canonical (for
  $\leq_2$) basis ${A}/I$ in construction.
\item \label{intro:step:relation} Otherwise, $x^\alpha \in LM(I)$ and the linear
  relation with the $x^\beta$ smaller than it for $\leq_2$ give rise to a
  polynomial in $I$ whose leading term is $x^\alpha.$
\end{enumerate}

Precision problems arise in step \ref{intro:step:independence} and
\ref{intro:step:relation}. The first one is the issue of testing the
independence of a vector from a linear subspace. While it is possible to prove
independence when the precision is enough, it is usually not possible to prove
directly dependence. It is however possible to prove some dependence when there
is more vectors in a vector space than the dimension of this vector space. It is
indeed some dimension argument that permits to prove the stability of FGLM. We
show (see Section \ref{section: stabilite FGLM}) that it is enough to treat
approximate linearly dependence (up to some precision) in the same way as in
the non-approximate case and to check at the end of the execution of the
algorithm that the number of independent monomials found is the same as the
degree of the ideal. The second issue corresponds to adding the computation of
an approximate relation. We show that the same idea of taking approximate linear
dependence as non-approximate and check at the end of the computation is enough.

\subsection{Linear algebra and Smith Normal Form}

As we have seen, the FGLM algorithm relies mainly on computational linear
algebra: testing linear independence and solving linear systems.

The framework of differential precision of \cite{Caruso:2014:2} has been applied
to linear algebra in \cite{Caruso:2015} for some optimal results on the behavior
of the precision for basic operations in linear algebra (matrix multiplication,
LU factorization). From this analysis it seems clear, and this idea is well
accepted by the community of computation over $p$-adics, that using the Smith
Normal Form (SNF) to compute the inverse of a matrix or to solve a well-posed
linear system is highly efficient and easy to handle. Moreover, it always
achieves a better behavior than classical Gaussian elimination, even allowing
gain in precision in some cases. Its optimality remains to be proved but in
comparison with classical Gaussian elimination, the loss of precision is far
fewer. \footnote{See Chapter 1 of \cite{Vaccon-these} for more details on the
  comparison between these two strategies.}

This is the reason why we use the SNF in the $p$-adic version of FGLM we propose
in this paper. In Section \ref{sec:SNF} we briefly recall some properties of the
SNF and its computation. We also provide a dedicated version of SNF computation
for the FGLM algorithm. More precisely, to apply the SNF computation to
iterative tests of linear independence (as in step
\ref{intro:step:independence}), we adapt SNF computation into some iterative SNF
in Algorithm \ref{Update FGLM stabilise}. This allows us to preserve an overall
complexity in $O(n \delta^3).$

%%  We refer to
%% Chapter 1 of \cite{Vaccon-these} for a more detailed analysis of SNF. It
%% nevertheless remains to be proved using differential precision that it is
%% optimal.\footnote{The form of the differential of the inverse application in
%%   $M,$ $\textrm{d}M \mapsto M^{-1} \mathrm{d}M M^{-1}$, leaves few doubt about
%%   this optimality.}

\section{SNF and linear systems}
\label{sec:SNF}
%\textbf{TO DO: handle labels}

\subsection{SNF and approximate SNF}

We begin by presenting our main tool, the SNF of a matrix in $M_{n,m}(K)$:

\begin{prop}
Let $M \in M_{n,m}(K)$. There exist some $P \in GL_n (O_K)$, $val(\det P)=0 $, $Q \in GL_m (O_K)$, $\det Q = \pm 1$ and $ \Delta \in M_{n,m}(K)$ such that $M = P \Delta Q$ and $\Delta$ is diagonal, with diagonal coefficients being $\pi^{a_1},\dots,\pi^{a_s},0,\dots,0$ with $a_1 \leq \cdots \leq a_s$ in $\mathbb{Z}.$
$\Delta$ is unique and called the Smith Normal Form of $M$, and we say that $P,\Delta,Q$ realize the SNF of $M$. The $a_i$ are called the invariant factors of $M.$
\end{prop}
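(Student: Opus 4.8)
The plan is to obtain existence from an explicit diagonalization procedure adapted to the fact that $O_K$, being the valuation ring of a discrete valuation, is a principal ideal domain in which $\pi$ generates the maximal ideal, and to obtain uniqueness from the invariance of the valuations of the minors of $M$ under multiplication by invertible matrices over $O_K$. First I would reduce to the case where $M$ has entries in $O_K$: setting $a=\min_{i,j} val(M_{i,j})\in\mathbb{Z}$, the matrix $\pi^{-a}M$ lies in $M_{n,m}(O_K)$, and a decomposition $\pi^{-a}M=P\Delta'Q$ of the claimed shape gives $M=P(\pi^a\Delta')Q$ with $\pi^a\Delta'$ still diagonal and with diagonal entries that are powers of $\pi$ whose exponents in $\mathbb{Z}$ are non-decreasing. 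So from now on assume $M\in M_{n,m}(O_K)$.

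Next I would run a loop maintaining, after $k$ steps, a factorization $M=P_kD_kQ_k$ with $P_k\in GL_n(O_K)$, $Q_k\in GL_m(O_K)$ such that $\det Q_k=\pm 1$, and $D_k$ block diagonal with an upper-left $k\times k$ block $\mathrm{diag}(\pi^{a_1},\dots,\pi^{a_k})$, where $a_1\le\cdots\le a_k$, and a complementary $(n-k)\times(m-k)$ block whose entries all have valuation $\ge a_k$. One step goes as follows: pick in the complementary block an entry of minimal valuation $a_{k+1}$; bring it to the top-left corner of that block using one row swap and one column swap, which contribute a factor $\pm 1$ to $\det P$ and to $\det Q$ respectively; rescale that row by the inverse of the unit part of the corner entry so that the corner becomes exactly $\pi^{a_{k+1}}$ --- this is a row operation, recorded only in $P$, with unit determinant; then clear the remainder of that column by row operations (recorded in $P$, determinant $1$) and the remainder of that row by column operations (recorded in $Q$, determinant $1$), which is legitimate since every entry of the block has valuation $\ge a_{k+1}$ and hence is an $O_K$-multiple of the corner $\pi^{a_{k+1}}$. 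The invariant is restored with $k+1$ in place of $k$: the new complementary block consists of $O_K$-linear combinations of entries of the old one, all of valuation $\ge a_{k+1}$, so each of its entries has valuation $\ge a_{k+1}$, which in turn forces $a_{k+2}\ge a_{k+1}$ at the next step. The loop halts when the complementary block is zero or empty; the number $s$ of steps performed, together with the accumulated $P$ (with $\det P\in O_K^\times$, hence $val(\det P)=0$), $Q$ (with $\det Q=\pm 1$) and $\Delta:=D_s$, gives the desired $M=P\Delta Q$.

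For uniqueness, for $1\le k\le\min(n,m)$ let $\mu_k(M)$ denote the minimum of $val(\det N)$ over all $k\times k$ submatrices $N$ of $M$, with the conventions $val(0)=+\infty$ and $\min\emptyset=+\infty$. By the Cauchy--Binet formula every $k\times k$ minor of $PMQ$ is an $O_K$-linear combination of the $k\times k$ minors of $M$, so $\mu_k(PMQ)\ge\mu_k(M)$; applying this with $P^{-1}\in GL_n(O_K)$ and $Q^{-1}\in GL_m(O_K)$ yields the reverse inequality, so $\mu_k$ is unchanged under left and right multiplication by matrices of $GL_n(O_K)$ and $GL_m(O_K)$. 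On the other hand a $k\times k$ submatrix of a diagonal matrix has determinant either $0$ or, up to a unit, a product of $k$ of its diagonal entries taken at distinct positions; since $a_1\le\cdots\le a_s$, the minimal valuation of a nonzero such product is $a_1+\cdots+a_k$ when $k\le s$, and no nonzero $k\times k$ minor exists when $k>s$. Hence $\mu_k(\Delta)=a_1+\cdots+a_k$ for $k\le s$ and $\mu_k(\Delta)=+\infty$ for $k>s$, so from $M=P\Delta Q$ we read off $s=\max\{k:\mu_k(M)<+\infty\}$ and $a_k=\mu_k(M)-\mu_{k-1}(M)$ for $1\le k\le s$ (with $\mu_0:=0$). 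Both depend only on $M$, so $\Delta$ is unique.

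The bookkeeping of elementary operations is routine; the only genuinely delicate point is ensuring that $\det Q$ equals exactly $\pm 1$ and not merely a unit, which is why all unit rescalings must be performed by row operations on the $P$-side. The uniqueness part is classical: over a discrete valuation ring the gcd of a finite family is $\pi$ to the minimum of their valuations, so the usual fact that $\pi^{a_1}\cdots\pi^{a_k}$ is, up to a unit, the gcd of the $k\times k$ minors of $M$ is exactly the computation of $\mu_k$ above.
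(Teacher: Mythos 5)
Your proof is correct: the pivot-on-minimal-valuation elimination (after scaling into $M_{n,m}(O_K)$, with all unit rescalings pushed onto the $P$-side so that $\det Q=\pm 1$) gives existence, and the invariance of the minimal valuation of the $k\times k$ minors under multiplication by $GL_n(O_K)$ and $GL_m(O_K)$ gives uniqueness via $a_k=\mu_k(M)-\mu_{k-1}(M)$. The paper states this proposition as a classical fact without proof, and your argument is exactly the one it relies on implicitly: the elimination is the same pivoting strategy as Algorithm~\ref{algo approx snf}, and the minors/gcd characterization of the invariant factors is the fact invoked in the footnote to the proof of Lemma~\ref{lem snf iteree} (only the trivial case $M=0$, where $\min_{i,j} val(M_{i,j})$ is undefined, deserves a one-line mention).
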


In a finite-precision context, we introduce the following variant of the notion of SNF:

\begin{defn} \label{FNS approchee}
Let $M \in M_{n,m}(K)$, known up to precision $O(\pi^l)$. We define an  \textbf{approximate SNF} of $M$ as a factorization \[M = P \Delta Q \] with $P \in M_n(R)$, $val(\det P)=0,$ $Q \in M_m(R)$ with $\det Q=\pm 1$ known up to precision $O(\pi^l)$ and $ \Delta \in M_{n,m}(K)$ such that $\Delta = \Delta_0+O(\pi^l)$, where $ \Delta_0 \in M_{n,m}(K)$ is a diagonal matrix, with diagonal coefficients of the form $\Delta_0 [1,1]= \pi^{\alpha_1}, \dots, \Delta_0 [\min (n,m),\min (n,m)] = \pi^{\alpha_{\min (n,m)}}$ with $\alpha_1 \leq \dots \leq \alpha_{\min (n,m)}$. $\alpha_i = + \infty$ is allowed. $(P, \Delta, Q)$ are said to realize an approximate SNF of $M.$
\end{defn}

To compute an approximate SNF, with use Algorithm \ref{algo approx snf}.
\vspace{-0.5cm}
\begin{algorithm}
 %\DontPrintSemicolon

  \scriptsize
 \caption{SNFApproximate:  Computation of an approximate SNF}
 \SetKwInOut{Input}{Input}\SetKwInOut{Output}{Output}

 \Input{$M \in M_{n \times m}(K)$, known up to precision $O(\pi^l),$ with $l>cond(M)$}
 \Output{$P,\Delta,Q$ realizing an approximate SNF of $M$.} 

\BlankLine

Find $i,j$ such that the coefficient $M_{i,j}$ realize $\min_{k,l} val (M_{k,l})$ \;
Track the following operations to obtain $P$ and $Q$ \;
Swap rows $1$ and $i$ and columns $1$ and $j$ \;
Normalize $M_{1,1}$ to the form $\pi^{a_1}+O(\pi^l)$ \; 
By pivoting reduce coefficients $M_{i,1}$ ($i>1$) and $M_{1,j}$ ($j>1$) to $O(\pi^l).$ \;
\textbf{Recursively}, proceed with $M_{i \geq 2, j \geq 2}$ \;
Return $P,M, Q$ \; \label{algo approx snf}
\end{algorithm}
\vspace{-0.5cm}

Behaviour of Algorithm \ref{algo approx snf} is given by the following proposition:

\begin{prop}\label{prop snf approchee}
Given an input matrix $M$, of size $n \times m$, with precision
$(O(\pi^l)$ on its coefficients, Algorithm \ref{algo approx snf}
terminates and returns $U,\Delta, V$ realising an approximate SNF of
$M$. Coefficients of $U, \Delta $ and $V$ are known up to precision
$O(\pi^l)$. Time complexity is in $O( \min(n,m) \max (n,m)^2)$
operations in $K$ at precision $O(\pi^l)$.
\end{prop}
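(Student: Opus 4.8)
The plan is to analyze Algorithm \ref{algo approx snf} directly, by tracking one recursive step and then inducting on $\min(n,m)$. First I would establish termination: each recursive call reduces the working submatrix from size $n' \times m'$ to size $(n'-1)\times(m'-1)$, so after $\min(n,m)$ steps the remaining block is empty (or a zero column/row vector at precision $O(\pi^l)$), and the recursion halts. At the generic step, the pivot $M_{i,j}$ is chosen to have minimal valuation, say $a_1 \leq l$ (the hypothesis $l > cond(M)$ guarantees at least the first pivot is visible, and inductively the same holds at each stage since invariant factors of a submatrix obtained by these operations have valuation at most $cond(M)$); after swapping it into position $(1,1)$ and normalizing by a unit, we have $M_{1,1} = \pi^{a_1} + O(\pi^l)$. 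Because $a_1$ is minimal, every $M_{i,1}$ and $M_{1,j}$ has valuation $\geq a_1$, so the quotients $M_{i,1}/M_{1,1}$ and $M_{1,j}/M_{1,1}$ lie in $R$, and the pivoting operations (which are left- and right-multiplications by elementary matrices with entries in $R$) clear the first row and column to $O(\pi^l)$ without introducing any denominators. Crucially, this means the accumulated transformation matrices $U$ and $V$ have entries in $R$, with $\det U$ a unit (product of unit normalizations and elementary operations) and $\det V = \pm 1$ (only swaps and transvections on the column side), matching Definition \ref{FNS approchee}.

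Next I would address the precision claim. The only arithmetic performed is: reading off valuations, multiplying by units, and subtracting $R$-multiples of a row/column from another. None of these operations divides by a non-unit, so no coefficient precision is lost: every entry of $U$, $\Delta$, and $V$ remains known to the same $O(\pi^l)$ as the input. The subtle point — and the one I expect to be the main obstacle — is verifying that the diagonal entries produced are actually \emph{sorted}, i.e. $\alpha_1 \leq \alpha_2 \leq \cdots$, as required by Definition \ref{FNS approchee}. The greedy minimal-valuation pivot choice gives $\alpha_1 \leq \mathrm{val}$ of every entry of the original matrix, hence $\alpha_1 \leq \alpha_2$ after one step; but to conclude $\alpha_2 \leq \alpha_3 \leq \cdots$ one must observe that after clearing the first row and column, the valuation of the minimal entry of the remaining $(n-1)\times(m-1)$ block can only have \emph{increased} relative to $\alpha_1$ (the transvections add elements of valuation $\geq a_1$), so applying the argument inductively to the submatrix yields the full chain of inequalities. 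One should also note that if at some stage the entire remaining block is $O(\pi^l)$, the algorithm correctly records the rest of the diagonal as $+\infty$ (i.e. indistinguishable from $0$ at the working precision), which is explicitly permitted by the definition.

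Finally, for the complexity bound: at the step where the working block has size $n' \times m'$, finding the minimal-valuation entry costs $O(n'm')$ comparisons, and clearing the first row and column costs $O(n'm')$ operations in $K$ at precision $O(\pi^l)$ (each of the $n'-1$ row operations touches $m'$ entries). Summing over the $\min(n,m)$ recursive calls, with $n' \leq n$ and $m' \leq m$ throughout, gives a total of $O(\min(n,m) \cdot n m) = O(\min(n,m)\max(n,m)^2)$ operations in $K$ at precision $O(\pi^l)$, which is the claimed bound. Assembling termination, the structural properties of $U, \Delta, V$, the no-loss precision statement, and this count completes the proof.
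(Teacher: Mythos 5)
The paper itself contains no proof of Proposition \ref{prop snf approchee} (it is deferred, together with the next proposition, to \cite{Vaccon:2014, Vaccon-these}), so your argument cannot be compared line by line with the text; judged on its own, your skeleton — induction on $\min(n,m)$, integrality of the pivoting multipliers because the pivot has minimal valuation, monotonicity of the pivot valuations to get $\alpha_1\le\alpha_2\le\cdots$, the $+\infty$ case, and the $O(\min(n,m)\max(n,m)^2)$ count — is the natural one and matches what the cited references do.

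There is, however, one step that is genuinely wrong as written: the claim that ``none of these operations divides by a non-unit, so no coefficient precision is lost.'' The pivoting step does divide by the pivot $M_{1,1}=\pi^{a_1}u$ with $u$ a unit, and this is a non-unit whenever $a_1>0$ (which is the typical case, $a_1\le cond(M)$). The quotient $c=M_{i,1}/M_{1,1}$ lies in $R$, but since $M_{i,1}$ is only given modulo $\pi^l$, $c$ is only determined modulo $\pi^{l-a_1}$; the entries of $U$ and $V$, which are built from these quotients, are therefore \emph{not} intrinsically determined modulo $\pi^l$ by the input, contrary to what you assert. What saves the statement is a compensation you never make explicit: whenever $c$ multiplies an entry of the pivot row or column, that entry has valuation at least $a_1$, so the uncertainty is $O(\pi^{l-a_1})\cdot O(\pi^{a_1})=O(\pi^l)$ and the updated entries of $\Delta$ do stay at precision $O(\pi^l)$; likewise, perturbing an entry of $U$ (resp.\ $V$) by $O(\pi^{l-a_1})$ changes the product $U\Delta V$ only by $O(\pi^l)$, because that entry meets a row (resp.\ column) of $\Delta$ of valuation at least $a_1$. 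Hence representatives of $U,V$ stored at working precision $O(\pi^l)$ still satisfy $M=U\Delta V+O(\pi^l)$, which is the sense in which Definition \ref{FNS approchee} and the proposition must be read. This is not a pedantic point: the loss of up to $cond(M)$ on these multipliers is precisely the mechanism behind the $cond(M)$ losses appearing in Algorithm \ref{algo snf precisee} and Proposition \ref{prop snf inverses}, so a complete proof must carry out this compensation argument rather than claim that no division by a non-unit occurs. (Your sortedness argument is fine; for the claim that every pivot stays ``visible'' it is cleanest to note that after one elimination step the minimal valuation of the remaining block is the next invariant factor, hence at most $cond(M)<l$ until the rank is exhausted.)
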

%\begin{proof} Contrairement au cas du calcul de la forme normale de
%Smith sur un anneau euclidien, le fait que l'on travaille sur un
%corps discrètement valué amène qu'après la phase d'élimination
%(approchée) par pivot sur la première colonne et la première ligne,
%tous les coefficients de $\widetilde{M}_{i \geq 2, j \geq 2}$ sont de
%valuation plus grande que $val (\widetilde{M}[1,1]).$ Ceci explique
%que contrairement au cas où l'on veut calculer la forme normale de
%Smith sur un anneau euclidien, il n'est pas nécessaire après
%élimination de chercher à nouveau un pivot sur tout $\widetilde{M}$
%qui atteint la plus petite valuation.  Le fait que l'algorithme
%termine et calcule bien une forme normale de Smith approchée est
%clair. Le seul point à vérifier est que les opérations de
%transvections mettent bien des coefficients $O(\pi^l)$ lorsqu'elles
%éliminent un coefficient, et que les matrices $P$ et $Q$ sont
%toujours connues à précision $O(\pi^l)$, mais ceci est direct avec la
%Proposition \ref{prop perte multiplication}.  \end{proof}

Now, it is possible to compute the SNF of $M$, along with an
approximation of a realization, from some approximate SNF of $M$ with Algorithm \ref{algo snf precisee}.

\vspace{-0.5cm}
\begin{algorithm}
  %\DontPrintSemicolon
  \scriptsize
 \caption{SNFPrecised : from approximate SNF to SNF}
 \SetKwInOut{Input}{Input}\SetKwInOut{Output}{Output}

 \Input{$(U, \Delta,V)$ (precision $O(\pi^l)$) realizing an approximate SNF of $M\in M_{n \times m}(K),$ of full rank. We assume $cond(M)<l$.}
 \Output{$\Delta_0$ the SNF of $M$, and $U',V'$ known with precision $O(\pi^{l-cond(M)})$ such that $M=U' \Delta_0 V'$, $val(\det U')=0$ and $\det V = \pm 1.$} %tels que $P,Q$ de déterminant $\pm 1$, et $\Delta = \Delta_0 + O(\pi^n)$ avec $\Delta_0$ diagonale telle que$val(\Delta_0 [1,1]) \leq \dots \leq val(\Delta_n [1,1])$.

$\Delta_0 \leftarrow \Delta$  \;
Track the following operations to obtain $P$ and $Q$ \;
$t := min(n,m)$ \; 
\For{$i$ from $1$ to $t$}{
Normalize $\Delta_0[i,i]$ \;
\eIf{$t=m$}{
By pivoting with $\Delta_0[i,i]$, eliminate the coefficients $\Delta_0[j,i]$ \;}
{By pivoting with $\Delta_0[i,i]$, eliminate the coefficients $\Delta_0[i,j]$ \;}
}
Return $\Delta_0,P,Q$ \;

\label{algo snf precisee}
\end{algorithm}
\vspace{-0.5cm}

\begin{prop}
Given an input matrix $M$, of size $n \times m$, with precision
$(O(\pi^l)$ on its coefficients ($l>cond(M)$), and $(U, \Delta,V)$,
known at precision $O(\pi^l)$, realizing an approximate SNF of $M$,
Algorithm \ref{algo snf precisee} computes the SNF of $M$, with $U'$
and $V'$ known up to precision $O(\pi^{l-cond(M)})$. Time-complexity
is in $O( \max (n,m)^2).$
\end{prop}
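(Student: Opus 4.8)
The plan is first to read the Smith Normal Form off the given approximate one, then to bound the precision lost while making the data exact. Write $M=U\Delta V$ for the input approximate SNF. Since $\det U$ is a unit and $\det V=\pm1$, both $U$ and $V$ are invertible over $R$, so $\Delta=U^{-1}MV^{-1}$ has the same invariant factors $a_1\le\dots\le a_t$ (with $t=\min(n,m)$) as $M$; as $M$ has full rank and $cond(M)=a_t<l$, these are all finite and $<l$. The next step is to argue that the exponents $\alpha_i$ on the diagonal of $\Delta$ are exactly the $a_i$: the loop of Algorithm~\ref{algo snf precisee} turns $\Delta$ into the \emph{exact} diagonal matrix $\Delta_0=\mathrm{diag}(\pi^{\alpha_1},\dots,\pi^{\alpha_t})$ (padded with zeros) by operations lying in $GL(R)$, which do not change invariant factors; since $\Delta_0$ is already in Smith form its invariant factors are the $\alpha_i$, so by uniqueness $\alpha_i=a_i$ and $\Delta_0$ is the SNF of $M$.

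It remains to check the loop and to track precision. At step $i$ the pivot $\Delta_0[i,i]$ has valuation $a_i\le cond(M)<l$, so ``Normalize'' rescales row $i$ by the unit part of that entry, making it exactly $\pi^{a_i}$ without degrading the rest of the row (the entries rescaled there have valuation $\ge l$). The off-diagonal coefficients being eliminated have valuation $\ge l$, hence the pivoting multipliers $\Delta_0[j,i]/\pi^{a_i}$ have valuation $\ge l-a_i\ge l-cond(M)\ge1$: the elementary matrices used lie in $GL(R)$, the targeted coefficients become $0$ exactly, the coefficients cleared at earlier steps remain $0$, and the fill-in produced elsewhere has valuation $\ge 2l-a_i>l$, hence is invisible at working precision. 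After $t$ iterations $\Delta_0$ is exactly $\mathrm{diag}(\pi^{a_i})$. Writing $\Delta_0=E_r\Delta E_c$ with $E_r$ the product of the row operations ($\det E_r$ a unit) and $E_c$ the product of the column operations ($\det E_c=\pm1$), we get $M=U'\Delta_0V'$ with $U'=UE_r^{-1}$, $V'=E_c^{-1}V$, hence $val(\det U')=0$ and $\det V'=\pm1$.

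The precision claim rests on one observation: all along the loop every coefficient of the working matrix $\Delta_0$ stays known to precision $O(\pi^l)$ --- cleared ones are $0$ exactly, diagonal ones become $\pi^{a_i}$ exactly, and the surviving off-diagonal ones are only perturbed by the negligible fill-in above. Hence every multiplier $\Delta_0[j,i]/\pi^{a_i}$ stays known to precision $O(\pi^{l-a_i})$, with valuation $\ge l-a_i$. Forming $U'$ from $U$ (resp.\ $V'$ from $V$) amounts to repeatedly adding to a column (resp.\ row) a multiple --- of valuation $\ge l-cond(M)$ and known to precision $O(\pi^{l-cond(M)})$ --- of another column (resp.\ row), so an easy induction gives that $U'$ and $V'$ end up known to precision $O(\pi^{l-cond(M)})$. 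Equivalently, at that target precision $E_r^{-1}$ and $E_c^{-1}$ are diagonal, so $U'$ (resp.\ $V'$) is just $U$ (resp.\ $V$) with its columns (resp.\ rows) rescaled by the unit parts of the pivots.

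For the complexity, the loop has $t=\min(n,m)$ iterations, each rescaling one row or column and zeroing the $\le\max(n,m)$ off-diagonal coefficients of one column or row --- the negligible fill-in being skipped --- for $O(\max(n,m))$ operations in $K$ apiece; and since the accumulated change of basis is diagonal at the output precision, maintaining $U'$ and $V'$ reduces to $O(\max(n,m))$ row/column rescalings, i.e.\ $O(\max(n,m)^2)$ operations, for a total of $O(\max(n,m)^2)$. The step I expect to need the most care is the claim that the $\alpha_i$ are exactly the invariant factors of $M$ when only $l>cond(M)$ is assumed --- equivalently, that the loop never meets a pivot that vanishes at precision $l$ --- and, hand in hand with it, the observation above that the working matrix loses no precision, which is what keeps the loss on $U',V'$ down to a single $cond(M)$ rather than the $2\,cond(M)$ a careless analysis would give.
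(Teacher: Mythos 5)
First, note that the paper does not actually prove this proposition in the text: it only points to \cite{Vaccon:2014, Vaccon-these}. So your proposal can only be measured against the natural argument, and in structure it is the right one: identify the diagonal exponents of $\Delta$ with the invariant factors of $M$, observe that the true elimination multipliers have valuation at least $l-\mathrm{cond}(M)$ so that, at the output precision, the update of $U$ and $V$ reduces to rescaling by the (approximately known) unit parts of the pivots, and count $O(\max(n,m)^2)$ operations. Your precision bookkeeping (loss of $a_i\le \mathrm{cond}(M)$ coming only from dividing by the pivots, fill-in of valuation $>l$ being invisible, one-sided clearing sufficing because every entry lies in a pivot row or column) and the complexity count are sound, as is the remark that this is where the loss stays at a single $\mathrm{cond}(M)$.

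There is, however, one genuine soft spot, and it is exactly the one you flag without closing: the identification $\alpha_i=a_i$, equivalently the fact that no pivot vanishes at precision $l$. As written, your argument is circular: you justify that the loop's operations lie in $GL(R)$ and produce exactly $\mathrm{diag}(\pi^{\alpha_i})$ by asserting ``at step $i$ the pivot has valuation $a_i\le \mathrm{cond}(M)<l$'', which already presupposes $\alpha_i=a_i$; without knowing $\alpha_i<l$ you can neither normalize the pivot nor bound the multipliers, so the $GL(R)$-equivalence you invoke is not yet available. The gap is real but short to fill, and it must be filled before the rest of your proof applies. One way: since $\Delta\equiv\Delta_0\pmod{\pi^l}$, the two matrices have the same Smith form over $R/\pi^l R$, and the Smith form over $R/\pi^l R$ of (the reduction of) a matrix over $R$ is the reduction of its Smith form; hence $\min(a_i,l)=\min(\alpha_i,l)$ for all $i$, and the hypotheses that $M$ has full rank and $\mathrm{cond}(M)=a_{\min(n,m)}<l$ force $\alpha_i=a_i<l$ for every $i$. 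Alternatively one can run the minor argument used in Lemma \ref{lem snf iteree} (comparing the gcd valuations $d_k$ of $k\times k$ minors of $\Delta$ and of $\Delta_0$, the off-diagonal perturbations contributing valuation at least $l$ each). With that established, your second and third paragraphs go through verbatim and the proposition follows.
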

%\begin{proof} La preuve est tout à fait similaire à celle de la
%Proposition \ref{piv} : les dilatations sont sans effet sur la
%précision (avec la proposition \ref{prop perte division}) et
%concernant les transvections, celles sur $\Delta_0$ éliminent (de
%manière symbolique) les coefficients non-diagonaux, et sont vus comme
%ajouter des $O(\pi^{l-a_i})$ hors du coefficient à éliminer ou
%lorsqu'elles sont effectuées sur $P$ et $Q$. Le résultat est alors
%clair.  Pour ce qui est du temps de calcul, nous pouvons remarquer
%que tout l'algorithme peut se résumer à modifier la précision sur
%coefficients des matrices $P,Q$ et $\Delta$, d'où le
%résultat.  \end{proof}
We refer to \cite{Vaccon:2014, Vaccon-these} for more details on how to
prove this result. We can then conclude on the computation of the SNF:

\begin{theo}
Given an input matrix $M$, of size $n \times m$, with precision
$O(\pi^l)$ on its coefficients ($l>cond(M)$), then by applying
Algorithms \ref{algo approx snf} and \ref{algo snf precisee}, we
compute $P,Q,\Delta$ with $M=P \Delta Q$ and $\Delta$ the SNF of
$M$. Coefficients of $P$ and $Q$ are known at precision
$O(\pi^{l-cond(M)})$. Time complexity is in $O(\max(n,m)^2 \min(n,m))$
operations at precision $O(\pi^l)$.
\end{theo}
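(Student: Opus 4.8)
The plan is to derive the theorem by composing the two algorithms and chaining the precision and complexity bounds already proved for each. First I would apply Algorithm~\ref{algo approx snf} to $M$; since $l>cond(M)$ its precision hypothesis is met, so Proposition~\ref{prop snf approchee} gives a triple $(U,\Delta,V)$ realizing an approximate SNF of $M$ with every coefficient of $U$, $\Delta$ and $V$ still known to precision $O(\pi^l)$, at a cost of $O(\min(n,m)\max(n,m)^2)$ operations in $K$ at precision $O(\pi^l)$. The point to record here is that Algorithm~\ref{algo approx snf} only ever normalizes by, and pivots with, an entry of minimal valuation, and all remaining entries have valuation at least that of the pivot, so the first stage loses no precision at all.

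Next I would feed $(U,\Delta,V)$ into Algorithm~\ref{algo snf precisee}. Its input requirements must be checked: the precision $O(\pi^l)$ matches, and $cond(M)<l$ is exactly the standing hypothesis. The one genuine gap is the full-rank condition, and dealing with it is the main obstacle of the proof: one reads off a candidate rank $r$ as the number of diagonal entries $\pi^{\alpha_i}$ of $\Delta$ with $\alpha_i<l$, and one must argue this equals the true rank. This is where $l>cond(M)$ is used a second time --- every actual invariant factor of $M$ has valuation at most $cond(M)<l$, hence remains visible in $\Delta$, while any diagonal entry that is $O(\pi^l)$ is indistinguishable from $0$ and cannot hide a further invariant factor; so the visible rank is the rank, and one applies Algorithm~\ref{algo snf precisee} to the top-left $r\times r$ block, padding the resulting row and column transformations with identity blocks. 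Invoking the proposition governing Algorithm~\ref{algo snf precisee} then yields the exact SNF $\Delta_0$ together with $U',V'$ with $M=U'\Delta_0 V'$, $val(\det U')=0$ and $\det V'=\pm 1$, known to precision $O(\pi^{l-cond(M)})$, using a further $O(\max(n,m)^2)$ operations; all the precision loss of the whole procedure is concentrated in this second stage.

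It then remains to set $P:=U'$, $\Delta:=\Delta_0$, $Q:=V'$, which gives $M=P\Delta Q$ with $\Delta$ the SNF of $M$ and with $P,Q$ known to precision $O(\pi^{l-cond(M)})$, as claimed; for the complexity one simply adds the two contributions, $O(\min(n,m)\max(n,m)^2)+O(\max(n,m)^2)=O(\min(n,m)\max(n,m)^2)$ operations at precision $O(\pi^l)$, the second term being dominated. Apart from the rank bookkeeping just described, the argument is purely formal: the delicate points are making sure the ``approximate zeros'' produced by the first algorithm genuinely lie below the working precision --- which is precisely what $l>cond(M)$ guarantees --- and that the precision of $U$ and $V$ is not degraded before they reach the second algorithm.
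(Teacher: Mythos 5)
Your argument is essentially the paper's: the theorem is obtained there simply by chaining the proposition on Algorithm~\ref{algo approx snf} (no precision loss, cost $O(\min(n,m)\max(n,m)^2)$) with the proposition on Algorithm~\ref{algo snf precisee} (loss $cond(M)$, cost $O(\max(n,m)^2)$), which is exactly your composition. Your extra rank bookkeeping is harmless but not needed, since in the paper's conventions $cond(M)<\infty$ (hence $l>cond(M)$) already presupposes full rank.
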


\subsection{Solving linear systems}

Computation of $P$ and $Q$ in the previous algorithms can be slightly
modified to obtain (approximation of) $P^{-1}$ and $Q^{-1},$ and thus
$M^{-1}.$

\begin{prop} \label{prop snf inverses}
Using the same context as the previous theorem, by modifying
Algorithms \ref{algo approx snf} and \ref{algo snf precisee} using the
inverse operations of the one to compute $P$ and $Q$, we can obtain
$P^{-1}$ and $Q^{-1}$ with precision $O(\pi^{l-cond(M)})$.  When $M
\in GL_n(K),$ using $M^{-1}=Q^{-1} \Delta^{-1} P^{-1},$ we
get $M^{-1}$ with precision $O(\pi^{l-2cond(M)})$.  Time complexity is
in $O(n^3)$ operations at precision $O(\pi^{l})$.
\end{prop}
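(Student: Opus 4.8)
The plan is to exploit the fact that $P$ and $Q$ are never used \emph{as matrices} inside Algorithms~\ref{algo approx snf} and~\ref{algo snf precisee}: they are only accumulated, step after step, as products of the elementary matrices encoding the row and column operations performed on $M$ --- transpositions, multiplications of a line by a unit, and transvections (adding a multiple of one line to another). Each elementary matrix $E$ is invertible, with $E^{-1}$ elementary of the very same shape: a transposition is its own inverse, the inverse of the scaling by a unit $\lambda$ is the scaling by $\lambda^{-1}$, and the inverse of the transvection with multiplier $c$ is the transvection with multiplier $-c$. Hence, at no extra cost, one may accumulate a running product realizing $P^{-1}$ (resp.\ $Q^{-1}$) instead of $P$ (resp.\ $Q$): whenever a row (resp.\ column) operation is applied to $M$, multiply the accumulator by the inverse elementary matrix on the opposite side and in reverse order. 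This is precisely the modification in the statement, and its cost is that of the previous theorem, $O(\min(n,m)\max(n,m)^2)$, i.e.\ $O(n^3)$ when $n=m$.

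For the precision of $P^{-1}$ and $Q^{-1}$, I would recall that the analysis behind the previous theorem loses precision at a single place, in SNFPrecised (Algorithm~\ref{algo snf precisee}), where the $O(\pi^l)$ off-diagonal entries of the approximate SNF are cleared by pivoting with a diagonal pivot $\pi^{a_i}+O(\pi^l)$; the corresponding transvection multiplier then has valuation at least $l-a_i\geq l-cond(M)$, since $cond(M)$ is the largest valuation of an invariant factor of $M$ (Definition~\ref{defn cond FGLM}), whereas transpositions and unit normalizations lose nothing. This estimate depends only on the (signed) scalars attached to the elementary operations, which are unchanged by passing to inverses, and on the fact that $P^{-1}$ and $Q^{-1}$ still lie in $GL(O_K)$ (their determinants being a unit, resp.\ $\pm1$). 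So the estimate transfers verbatim: $P^{-1}$ and $Q^{-1}$ are obtained at precision $O(\pi^{l-cond(M)})$.

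Finally, suppose $M\in GL_n(K)$. Then the SNF $\Delta$ has full rank and exact diagonal entries $\pi^{a_1},\dots,\pi^{a_n}$ with $a_1\leq\dots\leq a_n=cond(M)$, so $\Delta^{-1}$ has exact diagonal entries $\pi^{-a_1},\dots,\pi^{-a_n}$, and $M^{-1}=Q^{-1}\Delta^{-1}P^{-1}$. I would track the precision from the right: row $i$ of $\Delta^{-1}P^{-1}$ is $\pi^{-a_i}$ times row $i$ of $P^{-1}$, hence known to absolute precision $(l-cond(M))-a_i$, the worst case being $i=n$ with precision $l-2cond(M)$, with all entries of valuation $\geq -a_i\geq -cond(M)$. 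Left-multiplying by $Q^{-1}\in M_n(O_K)$ (entries of nonnegative valuation, known to precision $O(\pi^{l-cond(M)})$) cannot decrease this: writing the error on $M^{-1}$ as a term carrying the error of $P^{-1}$ (multiplied on the left by the $O_K$-matrix $Q^{-1}$ and by $\Delta^{-1}$) plus a term carrying the error of $Q^{-1}$ (multiplied on the right by $\Delta^{-1}P^{-1}$, whose entries have valuation $\geq -cond(M)$), a short valuation bookkeeping shows each summand is $O(\pi^{l-2cond(M)})$. Thus $M^{-1}$ is computed at precision $O(\pi^{l-2cond(M)})$; the two matrix products cost $O(n^3)$, giving the announced $O(n^3)$ operations at precision $O(\pi^l)$.

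The delicate point is the second step: one must be sure that reversing the side and the order of accumulation, and substituting elementary matrices by their inverses, does not silently degrade the precision estimate of the previous theorem. This is exactly what the differential-precision formalism of \cite{Caruso:2015,Caruso:2014:2} is designed to control, and it goes through here because the inverse elementary matrices carry the same combinatorial shape and the same pivot multipliers (up to sign) as the original ones, so the bound on the loss --- entirely driven by the valuations of the transvection multipliers occurring in SNFPrecised --- is unchanged.
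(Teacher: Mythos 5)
Your argument is correct and is essentially the intended one: the paper states this proposition without proof (deferring, as for the neighbouring SNF results, to \cite{Vaccon:2014,Vaccon-these}), and accumulating the inverse elementary operations, then doing the valuation bookkeeping on $Q^{-1}\Delta^{-1}P^{-1}$ with the loss concentrated in the eliminations of Algorithm~\ref{algo snf precisee}, is exactly the standard argument. One minor slip: the condition number $cond(M)$ of a matrix is the largest valuation of an invariant factor of $M$ itself, not Definition~\ref{defn cond FGLM} (which concerns the condition number of an ideal for a change of ordering), but this does not affect your reasoning.
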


We can then estimate the loss in precision in solving a
linear system:

\begin{theo} \label{thm sys lin}
Let $M \in GL_n(K)$ be a matrix with coefficients known up
to precision $O(\pi^l)$ with $l>2cond(M).$ Let $Y \in K^n$
be known up to precision $O(\pi^l)$.  Then one can solve $Y=MX$ in
$O(n^3)$ operations at precision $O(\pi^l).$ 
%($O(n^3)$ for the inverse by the SNF and $O(n^2)$ for conclusion). 
$X$ is known at precision
$O(\pi^{l-2cond(M)})$.
\end{theo}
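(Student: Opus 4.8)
The plan is to reduce the linear system solve to the SNF-based inversion already quantified in Proposition~\ref{prop snf inverses}. First I would invoke the conclusion of the previous theorem: applying Algorithms~\ref{algo approx snf} and~\ref{algo snf precisee} to $M$ produces a factorization $M = P\Delta Q$ with $\Delta$ the exact SNF of $M$, and with $P$, $Q$ (and hence, by the modification in Proposition~\ref{prop snf inverses}, $P^{-1}$ and $Q^{-1}$) known to precision $O(\pi^{l-cond(M)})$. Since $M \in GL_n(K)$, the matrix $\Delta$ is invertible and $cond(M)$ equals the largest valuation of an invariant factor, so $\Delta^{-1}$ is exactly computable (it is diagonal) from the $a_i$. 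Then write the solution as
\[
X = M^{-1}Y = Q^{-1}\,\Delta^{-1}\,P^{-1}\,Y .
\]
All of this costs $O(n^3)$ operations at precision $O(\pi^l)$, matching the claimed complexity.

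Next I would track the precision through these three multiplications. The delicate point, and the one I expect to be the main obstacle, is bookkeeping the interaction between the \emph{intrinsic} precision loss coming from $\Delta^{-1}$ (which divides by $\pi^{a_i}$, costing up to $cond(M)$ digits) and the \emph{input} precision on $P^{-1}$, $Q^{-1}$, and $Y$. Concretely: $P^{-1}Y$ is known to precision $O(\pi^{l-cond(M)})$ since $P^{-1}$ carries that precision and $Y$ carries $O(\pi^l)$, and $val(\det P)=0$ means no further loss in this product. Multiplying by $\Delta^{-1}$ shifts valuations down by at most $cond(M)$, so $\Delta^{-1}P^{-1}Y$ is known to precision $O(\pi^{l-2cond(M)})$. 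Finally multiplying by $Q^{-1}$, which has $\det Q = \pm 1$ and precision $O(\pi^{l-cond(M)}) \supseteq O(\pi^{l-2cond(M)})$, introduces no additional loss because a unimodular integer matrix preserves $\pi$-adic precision. Hence $X$ is known to precision $O(\pi^{l-2cond(M)})$, which is meaningful exactly because of the hypothesis $l > 2\,cond(M)$.

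The only subtlety worth spelling out carefully is why multiplication by $Q^{-1}$ (and by $P^{-1}$) does not degrade precision beyond what its own approximation already imposes: this is the standard fact that if $A \in M_n(R)$ with $val(\det A)=0$ then $A$ maps $(\pi^k R)^n$ bijectively onto itself, so $Ax$ is determined modulo $\pi^k$ as soon as $x$ is, and likewise an error of size $O(\pi^k)$ in $A$ against a bounded vector $x \in R^n$ contributes only an $O(\pi^k)$ error. Combining the three estimates along the chain $Y \mapsto P^{-1}Y \mapsto \Delta^{-1}P^{-1}Y \mapsto Q^{-1}\Delta^{-1}P^{-1}Y$ gives the stated $O(\pi^{l-2cond(M)})$ output precision, and the complexity is dominated by the two dense $n \times n$ products and the SNF computation, all $O(n^3)$ at precision $O(\pi^l)$.
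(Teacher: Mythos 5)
Your proposal is correct and follows essentially the same route as the paper, which derives this theorem directly from Proposition~\ref{prop snf inverses}: compute the SNF realization $M=P\Delta Q$ together with $P^{-1}$, $Q^{-1}$ via Algorithms~\ref{algo approx snf} and~\ref{algo snf precisee}, and obtain $X=Q^{-1}\Delta^{-1}P^{-1}Y$, losing $cond(M)$ digits from the factorization and $cond(M)$ more from $\Delta^{-1}$. Your precision bookkeeping along the chain (including the unimodularity of $Q$ and $val(\det P)=0$) matches the paper's intended argument and yields the stated $O(\pi^{l-2cond(M)})$ precision in $O(n^3)$ operations.
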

%\begin{proof}
%Pour résoudre $Y=MX$, il suffit d'avoir $Y=P \Delta Q X$. Or, d'après la proposition \ref{prop snf inverses},  on connaît $P^{-1}$ et $Q^{-1}$ à précision $O(\pi^{l-b})$ et $\Delta$ est connu à précision infinie, donc $\Delta^{-1}$ aussi.
%On a alors $X = Q^{-1} \Delta^{-1} P^{-1} Y$ où $Y$ connu à précision $O(\pi^l)$, $P^{-1},$ $Q^{-1} $ connus à précision $O(\pi^{l-b})$ et $\Delta^{-1}$ qui amène une multiplication par $\pi^{-b}$ de la dernière ligne de $P^{-1} Y$. En conclusion, avec la Proposition \ref{prop perte multiplication}, $X$ est connu à précision $O(\pi^{l-2b})$.
%\end{proof}

When the system is not square but we can ensure that $Y \in Im(M)$,
then we have the following variant:

\begin{prop} \label{prop sys lin pas carre}
Let $M \in M_{n,m}(K)$ be full rank matrix, with
coefficients known at precision $O(\pi^l),$ with $l> 2cond(M)$. Let $Y
\in K^n$ known at precision $O(\pi^l)$ be such that $Y \in
Im(M)$.  Then, we can compute $X$ such that $Y=MX$, with precision
$O(\pi^{l-2cond(M)})$ and time-complexity \penalty-3000 $O(nm \max (n,m))$ operations 
in $K$ at precision $O(\pi^l).$ 
%If $Y\in \mathcal{R}^n,$ coefficients of $X$ are of valuation at least
%$-b$.
\end{prop}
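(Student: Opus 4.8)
The plan is to reduce the system to a diagonal one via the Smith Normal Form, exactly as in the square case of Theorem~\ref{thm sys lin}, the only new ingredient being that one must \emph{not} form the $n\times n$ and $m\times m$ transition matrices explicitly, but rather keep track of the elementary row/column operations, so as to respect the announced complexity when $M$ is strongly rectangular. Concretely, I would first run Algorithms~\ref{algo approx snf} and \ref{algo snf precisee} on $M$ to obtain a realization $M=P\Delta Q$ of the SNF; since $M$ is of full rank, $\Delta$ has $r:=\min(n,m)$ nonzero invariant factors $\pi^{a_1},\dots,\pi^{a_r}$ with $a_1\le\cdots\le a_r=cond(M)$. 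Instead of storing $P,Q$ (or, via Proposition~\ref{prop snf inverses}, $P^{-1},Q^{-1}$) densely, I record the sequence of elementary row operations whose composite is $P^{-1}$ and the sequence of elementary column operations whose composite is $Q^{-1}$; by Proposition~\ref{prop snf approchee} the operations produced by Algorithm~\ref{algo approx snf} are known at precision $O(\pi^l)$, and after the refinement of Algorithm~\ref{algo snf precisee} the composite operations needed to reach the exact $\Delta$ are known at precision $O(\pi^{l-cond(M)})$; there are $O(nm)$ of them (at most $r$ pivot stages, each clearing at most $\max(n,m)-1$ positions).

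Next, writing $W:=P^{-1}Y$ and $Z:=QX$, the system $Y=MX$ is equivalent to $\Delta Z=W$. I compute $W$ by applying the recorded row operations to $Y$; assuming $Y\in R^n$ (clearing a denominator $\pi^{-val(Y)}$ beforehand, which costs nothing and affects only the valuation, not the final relative precision), $W$ is obtained at precision $O(\pi^{l-cond(M)})$. The hypothesis $Y\in Im(M)$ translates into $W\in Im(\Delta)$, so the last $n-r$ coordinates of $W$ are $O(\pi^{l-cond(M)})$ and may be discarded, and I set $Z_i:=\pi^{-a_i}W_i$ for $i\le r$; this is known at precision $O(\pi^{l-cond(M)-a_i})\subseteq O(\pi^{l-2cond(M)})$ since $a_i\le cond(M)$. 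In the wide case $m>r$ the coordinates $Z_{r+1},\dots,Z_m$ are free, and I simply set them to $0$. Finally I recover $X:=Q^{-1}Z$ by applying the recorded (inverse) column operations to $Z$; because these operations and the coordinates of $Z$ are at most $\pi^{-cond(M)}$-away from $R$, this step does not degrade the precision below $O(\pi^{l-2cond(M)})$, and by construction $MX=P\Delta QQ^{-1}Z=P\Delta Z=PW=Y$ up to the working precision.

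For the complexity, the cost is dominated by the SNF computation itself, which is $O(\min(n,m)\max(n,m)^2)=O(nm\max(n,m))$ operations in $K$ at precision $O(\pi^l)$ by Proposition~\ref{prop snf approchee}; applying $O(nm)$ elementary operations to the vectors $Y$ and $Z$ adds only $O(nm)$ further operations, so the total is $O(nm\max(n,m))$, as claimed.

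The main obstacle, and essentially the only subtle point, is the precision bookkeeping: one has to verify that applying the recorded elementary operations directly to the vectors yields the same precision that the matrices $P^{-1}$ and $Q^{-1}$ would (this rests on their integrality and on $val(Z)\ge -cond(M)$), and that the single division by $\pi^{a_i}$ with $a_i\le cond(M)$ is the only loss beyond the $cond(M)$ already paid by Algorithm~\ref{algo snf precisee}, giving the stated total loss of $2\,cond(M)$; the legitimacy of throwing away the last $n-r$ coordinates of $W$, and of choosing any lift of the free coordinates in the wide case, is precisely what the assumption $Y\in Im(M)$ secures.
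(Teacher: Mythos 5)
Your argument is correct and follows exactly the route the paper intends (and leaves implicit): reduce $Y=MX$ to the diagonal system $\Delta Z=P^{-1}Y$ via the SNF realization of Algorithms~\ref{algo approx snf} and \ref{algo snf precisee} and Proposition~\ref{prop snf inverses}, use $Y\in Im(M)$ to discard the negligible trailing coordinates and the free ones, and pay $cond(M)$ once for the transition data and once more for the division by the invariant factors, for a total loss of $2\,cond(M)$ as in Theorem~\ref{thm sys lin}. Your precision bookkeeping (integrality of the recorded operations, $val(Z)\geq -cond(M)$) and the complexity count $O(\min(n,m)\max(n,m)^2)=O(nm\max(n,m))$ are both sound.
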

%\begin{proof}
%L'idée de démonstration est essentiellement la même que pour le Théorème \ref{thm sys lin}, mais cette fois-ci $\Delta$ n'est pas inversible.
%En conséquence, remarquons que si $Y \in Im(M)$, alors $P^{-1}Y \in Im(\Delta).$ Ceci implique que les coefficients de $P^{-1}Y$ d'indice strictement plus grand que $\min (n,m)$ sont nuls. Il est alors possible d'écrire $P^{-1}Y=\Delta X_0$ avec $X_0$ connu à précision $O(\pi^{l-2b})$, $P^{-1}Y$ étant connu à précision $O(\pi^{l-b})$. Si $Y \in \mathcal{R}^n,$ les coefficients de $X_0$ sont de valuation au moins $-b$.
%Finalement, $X=Q^{-1} X_0$ est connu à précision $O(\pi^{l-2b})$ et convient.
%\end{proof}
\vspace{-0.5cm}
\begin{algorithm}
%\DontPrintSemicolon

  \scriptsize
 \SetKwInOut{Input}{Input}\SetKwInOut{Output}{Output}

 \Input{ The reduced Gröbner basis $G$ of the zero-dimensional ideal $I \subset A$ for a monomial ordering $\leq$. $\deg I = \delta$. $B_\leq =(1=\epsilon_1 \leq  \epsilon_2 \leq \dots \leq \epsilon_\delta)$ the canonical basis of $A/I$ for $\leq.$\\
A monomial ordering $\leq_2$.}
 \Output{An approximate Gröbner basis $G_2$ of $I$ for $\leq_2$, or \textbf{Error} if the precision is not enough.}

Compute the multiplication matrices $T_1,\dots,T_n$ for $I$ and $\leq$ with Algorithm \ref{calcul des matrices de multiplication} \;
$B_2 := \{ 1 \}$ ; $\mathbf{v}= [{}^t (1,\dots,0)]$ ; $G_2 := \emptyset$ \;
$L := \{ (1,n),(1,n-1),\dots,(1,1) \}$ \;
$Q1,Q2,P1,P2,\Delta:=I_1,I_1,I_\delta,I_\delta,\mathbf{v}$ \;
\While{ $L \neq \emptyset$ }{
$m := L[1]$ ; Erase $m$ of $L$ \;
$j:=m[1]$ ; $i:=m[2]$ \;
$v := T_i \mathbf{v}[j]$ \;
$s:= card(B_2)$ \;
$\lambda = {}^t (\lambda_1,\dots,\lambda_\delta) := P_1 v$ \;
\eIf{ we have no significant digit on $\lambda_{s+1},\dots,\lambda_\delta$ (\textit{i.e.} those are $O(\pi^v)$)}{
Compute the SNF of $\mathbf{v}$ from the approximate SNF given by $\Delta,$ $P_1,Q_1$ and their inverses $P_2,Q_2$, with Algorithm \ref{algo snf precisee} \;
Find $W$ such that $\mathbf{v}W=v$, thanks to the SNF of $\mathbf{v}$, assuming that $v \in Im(\mathbf{v})$ (Proposition \ref{prop sys lin pas carre})\;
$G_2 :=  G_2 \cup \{B_2[j] x_i-\sum_{l=1}^s W_l B_2[l] \}$}
{$B_2:=B_2 \cup \{B_2[j] x_i \}$ \;
$\mathbf{v}=\mathbf{v} \cup [v ]$ \;
$L:=IncreasingSort ( L \cup [(s+1,l) \vert 1\leq l \leq n] , \leq_2 )$ \;
Remove the repeats in $L$ \;
$Update(\mathbf{v},s,P_1,P_2,Q_1,Q_2,\Delta)$ \;}
Remove from $L$ all the multiples of $LM_{\leq_2}(G_2)$ \;
}
\eIf{$card(B_2)=\delta$}{
Return $G_2$ \;
}
{Return "\textbf{Error}, not enough precision"}

 \caption{Stabilized FGLM} \label{FGLM stabilise}
\end{algorithm}
\vspace{-0.5cm}
\normalsize 
\section{Stability of FGLM}
\label{section: stabilite FGLM}
\subsection{A stabilized algorithm}
This section is devoted to the study of the FGLM algorithm at finite precision over $K$.
More precisely, we provide a stable adaptation of this algorithm.
The main difference with the classical FGLM algorithm consists in the 
replacement of the row-echelon form computations by SNF computation, as in Section \ref{sec:SNF}.
This way, we are able to take advantage of the smaller loss in precision of the SNF,
and the nicer estimation on the behaviour of the precision it yields.

FGLM is made of Algorithms \ref{calcul des matrices de multiplication}, \ref{FGLM stabilise}
and \ref{Update FGLM stabilise}, with Algorithm \ref{FGLM stabilise} being the 
main algorithm.

%\textbf{TO DO : improve intro of the section} Nous débutons notre
%étude de l'algorithme FGLM à précision finie en présentant un
%algorithme, que nous appellerons Algorithme FGLM stabilisé, et qui est
%adapté à la précision finie. La principale différence avec
%l'algorithme FGLM classique est que nous
%remplaçons l'échelonnement en lignes par un calcul de formes normales
%de Smith, approchées et exactes, comme définies en Section
%\ref{sec:SNF}. Ceci permet de conserver la perte de précision plus
%faible obtenue en passant par cette forme normale, ainsi qu'une étude
%plus aisée de la perte de précision.

\begin{rmk} For the linear systems solving in Algorithm \ref{FGLM stabilise}, we use the computation of a SNF from an approximate SNF thanks to Algorithm \ref{algo snf precisee}, and then solve the system as in \ref{prop sys lin pas carre}.
\end{rmk}

The remaining of this Section is devoted to the proof of our main theorem \ref{thm stabilite fglm}.

\subsection{Proof of the algorithm}

To prove Theorem \ref{thm stabilite fglm} regarding the stability of Algorithm \ref{FGLM stabilise}, we first begin by a lemma to control the behaviour of the condition number of $\mathbf{v}$ during the execution of the algorithm, and then apply it to prove each component of the proof one after the other.

A preliminary remark can be given: over infinite precision, correction and termination of Algorithm \ref{FGLM stabilise} are already known. Indeed, the only difference in that case with the classical FGLM algorithm is that the independence testing and linear system solving are done using (iterated) SNF instead of reduced row-echelon form computation. 

\subsubsection{Growth of the condition in iterated SNF}

In order to control the condition number of $\mathbf{v}$ during the execution of the algorithm, and thus control the precision, we use the following lemma:

\begin{lem} \label{lem snf iteree}
Let $M \in M_{s, \delta}(K)$ be a matrix, with $s < \delta$ being integers. Let $v \in K^\delta$ be a vector and $M' \in M_{s+1, \delta}(K)$ the matrix obtained by adjoining the vector $v$ as an $(s+1)$-th column for $M.$ 
Let $c=cond(M)$, and $c'=cond(M').$ We assume $c,c' \neq + \infty$ (\textit{i.e.}, the matrices are of full-rank). Then $c \leq c'$.
\end{lem}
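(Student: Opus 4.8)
The plan is to recast the statement, which compares the largest invariant factors of $M$ and $M'$, as a statement about the torsion of the cokernels of the two matrices, and to exploit that passing from $M$ to $M'$ only quotients that cokernel by one cyclic submodule.

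First, since scaling a matrix by $\pi^{-c}$ shifts every one of its invariant factors by $-c$, we may scale $M$ and $M'$ simultaneously by $\pi^{-c}$ (with $c$ the least valuation of an entry of $M'$) without changing either side of the claimed inequality; so I assume $M,M'$ have entries in $R$ and all invariant factors are $\ge 0$. (I read $v$ as the adjoined $(s+1)$-th \emph{row}, in keeping with the sizes $M\in M_{s,\delta}$, $M'\in M_{s+1,\delta}$, so that $M$ is the submatrix of the first $s$ rows of $M'$; the column version is the transpose of this and the SNF is transpose-stable.) Writing the SNF $M=P\Delta Q$ with $P\in GL_s(R)$, $Q\in GL_\delta(R)$, one sees that the $\mathcal{O}_K$-module $N$ generated by the rows of $M$ inside $R^\delta$ satisfies $R^\delta/N\cong R^{\delta-s}\oplus\bigoplus_{i=1}^{s}R/\pi^{\alpha_i}$, with the $\pi^{\alpha_i}$ the invariant factors of $M$; hence $cond(M)$ is the exponent of the torsion submodule $T:=\mathrm{Tors}(R^\delta/N)$, and likewise $cond(M')$ is the exponent of $\mathrm{Tors}(R^\delta/N')$, where $N':=N+Rv$ is the module generated by the rows of $M'$.

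The core of the proof is then short. From $N'=N+Rv$ we get $R^\delta/N'\cong (R^\delta/N)/R\bar v$ with $\bar v$ the class of $v$. Since $M'$ has full rank $s+1$, $v$ is not a $K$-linear combination of the rows of $M$, so $\bar v$ is not torsion in $R^\delta/N$; writing $\bar v=(w,t)$ under $R^\delta/N\cong R^{\delta-s}\oplus T$ this forces $w\ne 0$, and then $R\bar v\cap(0\oplus T)=0$ because $R^{\delta-s}$ is torsion-free. Therefore the composite $T\cong 0\oplus T\hookrightarrow R^\delta/N\longrightarrow R^\delta/N'$ is injective, so $R^\delta/N'$ contains a copy of the torsion module $T$; that copy lies in $\mathrm{Tors}(R^\delta/N')$, whence $cond(M')=\exp\bigl(\mathrm{Tors}(R^\delta/N')\bigr)\ge\exp(T)=cond(M)$.

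I expect the only real obstacle to be isolating that last idea correctly: the hypothesis that $M'$ has full rank is precisely what guarantees that adjoining $v$ raises the rank (equivalently, that $\bar v$ has a nonzero free component $w$) rather than merely shrinking the torsion — and without it the inequality genuinely fails, since quotienting by a cyclic submodule can lower a torsion exponent (e.g.\ $R/\pi^2\to R/\pi$). Everything else is routine bookkeeping with the structure theorem over the discrete valuation ring $R$. As an alternative, one could simply invoke Thompson's interlacing inequalities for Smith normal forms (deleting a row of $M'$ to obtain $M$ gives $\alpha_i(M)\le\alpha_{i+1}(M')$ for all $i$) and specialize to $i=s$, but the cokernel argument above is self-contained and, I think, more transparent.
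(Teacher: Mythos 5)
Your proof is correct, and it takes a genuinely different route from the paper's. The paper argues via determinantal divisors: it uses the classical fact that $c'=d'_{s+1}-d'_s$ (smallest valuations of $(s+1)\times(s+1)$ and $s\times s$ minors of $M'$), transforms $M'$ by the $GL(R)$-matrices realizing the SNF of $M$ so that it becomes the diagonal $\pi^{a_1},\dots,\pi^{a_s}$ augmented by one transformed column $w$, and then compares minors directly, which requires a small case analysis of which $(s+1)\times(s+1)$ minors are nonzero. You instead scale to integral entries, identify $cond$ with the exponent of the torsion part of the cokernel $R^\delta/N$ via the structure theorem, and observe that since adjoining $v$ raises the rank, the class $\bar v$ is non-torsion, so the torsion submodule $T$ injects into $R^\delta/N'$ and the exponent can only grow. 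Your argument is more conceptual: it avoids the minor bookkeeping, makes transparent exactly where the full-rank hypothesis of $M'$ enters (your $R/\pi^2\to R/\pi$ remark correctly shows the inequality can fail otherwise), and you sensibly resolve the row/column mismatch in the statement by transpose-invariance of the SNF, which the paper's proof handles implicitly by treating $M$ as $\delta\times s$. The paper's minor-based computation, on the other hand, stays closer to the explicit matrix data ($P_1,Q_1,\Delta$) that the stabilized FGLM algorithm actually maintains, and the interlacing inequality you mention in passing would indeed give the result at $i=s$ as well. One cosmetic point: you reuse the letter $c$ both for $cond(M)$ and for the minimal entry valuation in your scaling step; rename one of them.
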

\begin{proof}
We use the following classical fact : let $d_{s}'$ be the smallest valuation achieved by an $s \times s$ minor of $M'$, and $d_{s+1}'$ the smallest valuation achieved by an $(s+1) \times (s+1)$ minor of $M'$, then\footnote{This is a direct consequence of the fact that for an ideal  $\mathscr{I}$ in the ring of integers of a discrete valuation field, any element $x \in \mathscr{I}$ such that $val(x)=\min val(\mathscr{I})$ generates $\mathscr{I}$, with the converse being true.}  $c'=d_{s+1}'-d_{s}'$.

In our case, let $P,Q,\Delta$ be such that $\Delta$ is the SNF of $M,$ $P \in GL_{\delta}(R),$ $Q \in GL_s (R)$ and $PMQ=\Delta$. Then, by augmenting trivially $Q$ to get $Q'$ with $Q'_{s+1,s+1}=1$, we can write:
\vspace{-0.5cm}
\scriptsize
\[PM'Q'= \begin{tikzpicture}[baseline=(current bounding box.center),scale=0.2]
\matrix (m) [matrix of math nodes,nodes in empty cells,right delimiter={]},left delimiter={[}, ampersand replacement=\& ]{
\pi^{a_1}		\& 	\& 	 \&	\& \& w_1 \\
	\& 		\& \& {}^0	\& \& \\
	\& 		\& \&	\&\& \\
		\& \&	 \&	\& \& \\
	\& 		\& \&	\& \pi^{a_s} \& \\
		\& \&		\& 	\& \& \\
		\& 	\&0	\& 	\& \& \\
	\& 		\& 	\&\& \& w_{\delta} \\
} ;
\draw[loosely dotted] (m-1-1)-- (m-5-5);
\draw[loosely dotted] (m-1-6)-- (m-8-6);
\draw (m-2-1)-- (m-6-5) -- (m-8-5) -- (m-8-1) -- (m-2-1);
\draw (m-1-2)-- (m-1-5) -- (m-4-5) -- (m-1-2);
\end{tikzpicture}.\]
\normalsize
In this setting, $c = a_s$.

Moreover, we can deduce from this equality that $d'_{s+1}$ is of the form $a_1+\dots+a_s+val(w_k)$ for some $k>s$. Indeed, the non-zero minors $(s+1) \times (s+1)$ of $PM'Q'$ are all of the following form: they correspond to the choice of $(s+1)$ row linearly independent, and all the rows of index at least $(s+1)$ are in the same one-dimensional sub-space. With such a choice of rows, the corresponding minor is the determinant of a triangular matrix, whose diagonal coefficients are $\pi^{a_1},\dots,\pi^{a_s},w_k$.

On the other hand, $a_1+\dots+a_{s-1}+val(w_k)$ is the valuation of an $s \times s$ minor of $PM'Q'$. By definition, we then have $d'_s \leq a_1+\dots+a_{s-1}+val(w_k)$. Since $d_{s+1}'=a_1+\dots+a_s+val(w_k)$ and $c'=d_{s+1}'-d_{s}'$, we deduce that $c' \geq a_s= c,$ \textit{q.e.d.}.
\end{proof}
%\begin{rmk}
%The result can be trivially extended to the case where the given matrices are not necessarily of full rank. Yet, it is then not particularly interesting, since it amounts to have either $c'=+\infty$ and then necessarily $c \leq c'$ (whatever the rank of $M$ is), or  $c=+\infty$ and then $c'=+\infty$.
%\end{rmk}

We introduce the following notation:

\begin{defn}
Let $E$ be an $R$-module and $X \subset E$ a finite subset. We write $Vect_R(X)$ for the $R$-module generated by the vectors of $X$.
\end{defn}

The previous lemma has then the following consequence:

\begin{lem} \label{lem ecriture avec snf dans fglm}
Let $I,G_1,\leq,\leq_2,B_\leq,B_{\leq_2}$ be as in Theorem \ref{thm stabilite fglm}. Let $x^\beta \in \mathscr{B}_{\leq_2}(I)$. Let $V=Vect_R ( \{ NF_\leq (x^\alpha) \vert x^\alpha \in B_{\leq_2}, \:  x^\alpha < x^\beta \} )$ then $NF_\leq (x^\beta) \in \pi^{-cond_{\leq, \leq_2}(I)} V$
\end{lem}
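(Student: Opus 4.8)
The plan is to reduce the statement to an application of Lemma~\ref{lem snf iteree} together with the characterization of the condition number via minors. First I would set up notation: order the monomials $x^{\alpha_1} < x^{\alpha_2} < \dots < x^{\alpha_s}$ of $B_{\leq_2}$ that are strictly below $x^\beta$, let $w_i = NF_\leq(x^{\alpha_i}) \in K^\delta$ (coordinates in the basis $B_\leq$ of $A/I$), let $M \in M_{\delta,s}(K)$ (or $M_{s,\delta}$ after transposing — I must match the orientation used in Lemma~\ref{lem snf iteree}) be the matrix with columns $w_1,\dots,w_s$, and let $M'$ be $M$ with the extra column $v = NF_\leq(x^\beta)$ adjoined. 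The key point is that $M'$ has full rank $s+1$: this is exactly the defining property of $x^\beta$ being in $\mathscr{B}_{\leq_2}(I)$, namely that $x^\beta$ is linearly independent modulo $I$ from all smaller monomials of $B_{\leq_2}$; and likewise $M$ has full rank $s$ since $B_{\leq_2}$ is a basis of $A/I$ so its elements are independent. Hence both $cond(M) = c$ and $cond(M') = c'$ are finite and Lemma~\ref{lem snf iteree} applies, giving $c \le c'$; moreover $c' \le cond_{\leq,\leq_2}(I)$ because $M'$ is (after reordering columns) a submatrix consisting of $s+1$ columns of the matrix $M$ from Definition~\ref{defn cond FGLM}, and the largest invariant-factor valuation of a submatrix is bounded by that of the full matrix — this follows from the minor characterization, since every minor of $M'$ is a minor of the full $NF$-matrix.

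Next I would extract the containment $NF_\leq(x^\beta) \in \pi^{-c'} V$ where $V = Vect_R(w_1,\dots,w_s)$, which by the previous paragraph is stronger than the claimed $\pi^{-cond_{\leq,\leq_2}(I)}V$ and hence suffices. For this I use the Smith Normal Form of $M$: write $PMQ = \Delta$ with $P \in GL_\delta(R)$, $Q \in GL_s(R)$, and $\Delta$ diagonal with entries $\pi^{a_1},\dots,\pi^{a_s}$ (so $c = a_s$), exactly as in the proof of Lemma~\ref{lem snf iteree}. Then $V = Vect_R(w_1,\dots,w_s) = P^{-1} Vect_R(\pi^{a_1}e_1,\dots,\pi^{a_s}e_s)$, i.e.\ $V$ is the preimage under $P$ of the "staircase" lattice $\bigoplus_{i=1}^s \pi^{a_i} R e_i$. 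Writing $v' = Pv \in K^\delta$, the claim $v \in \pi^{-c}V$ becomes $v' \in \pi^{-c}\bigoplus_{i=1}^s \pi^{a_i} R e_i + \pi^{-c}\,(\text{the part outside the column span})$ — but here I must be careful: $v$ need not lie in $K\text{-span}(w_1,\dots,w_s)$ at all, so the statement must be read as $NF_\leq(x^\beta) \in \pi^{-c'}V$ with $V$ the $R$-module, which forces $v$ itself to lie in $K\otimes_R V = K\text{-span}(w_i)$. Let me re-examine: in fact $v = NF_\leq(x^\beta)$ does lie in the $K$-span of $\{NF_\leq(x^\alpha) : x^\alpha \in B_{\leq_2}, x^\alpha < x^\beta\}$ is false in general; the correct reading (and what the algorithm uses) is that the relevant $W$ with $\mathbf{v}W = v$ exists, i.e.\ one is in the dependent case. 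Re-reading Lemma~\ref{lem ecriture avec snf dans fglm}: $x^\beta \in \mathscr{B}_{\leq_2}(I)$ should denote the staircase/border — the set of $x^\beta \notin B_{\leq_2}$ that are minimal generators of $LM_{\leq_2}(I)$ — so that $x^\beta$ IS dependent on the smaller monomials of $B_{\leq_2}$, and then $v \in K\text{-span}(w_i)$ genuinely holds, and $M$ has rank $s$ while $M' = [M\,|\,v]$ still has rank $s$, not $s+1$ — so Lemma~\ref{lem snf iteree} is NOT what one applies directly.

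So the hard part will be handling the dependent case correctly, and here is the right argument. Since $v \in K\text{-span}(w_1,\dots,w_s)$ and the $w_i$ are $K$-linearly independent, write $v = \sum_i c_i w_i$ with $c_i \in K$; I must show $c_i \in \pi^{-cond_{\leq,\leq_2}(I)} R$. Using the SNF $PMQ = \Delta$, set $v' = Pv$; then $v' = \Delta Q^{-1} c$ where $c = (c_1,\dots,c_s)^t$, i.e.\ $v'_i = \pi^{a_i} (Q^{-1}c)_i$ for $i \le s$ and $v'_i = 0$ for $i > s$. Since $P \in GL_\delta(R)$ and $v = NF_\leq(x^\beta)$ has entries in $R$ (it is a normal form of a monomial, hence its coordinates lie in $O_K$ — this uses that $G$ is a \emph{reduced} Gröbner basis with the coefficient normalization, or more carefully that after clearing we may assume integrality), $v'$ has entries in $R$, so $(Q^{-1}c)_i \in \pi^{-a_i}R \subseteq \pi^{-a_s}R = \pi^{-c}R$. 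Then $c = Q(Q^{-1}c)$ with $Q \in GL_s(R)$ gives $c_i \in \pi^{-c}R$. Finally $c = cond(M) \le cond_{\leq,\leq_2}(I)$ by the minor characterization (every $s\times s$ minor of $M$ is, up to sign, an $s\times s$ minor of the full matrix from Definition~\ref{defn cond FGLM}, obtained by selecting the appropriate rows and the columns corresponding to $w_1,\dots,w_s$, so the largest invariant factor of $M$ is at most that of the full matrix). Therefore $v = \sum c_i w_i \in \pi^{-c}V \subseteq \pi^{-cond_{\leq,\leq_2}(I)}V$, which is the claim. I expect the main obstacle to be the bookkeeping around integrality of $NF_\leq(x^\beta)$ and the precise comparison $cond(M) \le cond_{\leq,\leq_2}(I)$ via minors — the minor inequality needs that passing to a column-submatrix cannot increase the top invariant factor, which is immediate from $d_k(\text{submatrix}) \ge d_k(\text{matrix})$ for each $k$ and the formula $a_k = d_k - d_{k-1}$, but one should state it cleanly; the rest is a direct unwinding of the SNF.
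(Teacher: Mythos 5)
Your overall route coincides with the paper's for the first two steps: because $x^\beta$ lies in the border $\mathscr{B}_{\leq_2}(I)$, its $\leq_2$-normal form is supported on monomials of $B_{\leq_2}$ smaller than $x^\beta$, so $NF_\leq(x^\beta)$ is a $K$-combination of the $w_i$ (the paper gets this from the correctness proof of classical FGLM), and your explicit SNF unwinding showing $val(c_i)\geq -cond(\mathbf{v})$ is exactly the computation the paper imports by citing the proof of Proposition \ref{prop sys lin pas carre}. Your hesitation about integrality of $NF_\leq(x^\beta)$ is legitimate (reducedness of $G_1$ does not force the normal-form coordinates into $R$ when $\beta<0$), but the paper's proof glosses over the same point, so this is not a gap relative to the paper's argument.

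The genuine gap is your justification of the final inequality $cond(\mathbf{v})\leq cond_{\leq,\leq_2}(I)$. You claim it is ``immediate'' from the inclusion of minors, i.e.\ from $d_k(\text{submatrix})\geq d_k(\text{matrix})$ together with $a_k=d_k-d_{k-1}$. Those inequalities go the wrong way to control a difference, and the inference is invalid: the same reasoning applies verbatim to an arbitrary (row-and-column) submatrix, where the conclusion is false. For instance, the $2\times 2$ matrix with rows $(\pi^{10},\,1)$ and $(1,\,0)$ has $d_1=d_2=0$, hence both invariant factors of valuation $0$, yet its $1\times 1$ submatrix $(\pi^{10})$ has top invariant factor of valuation $10$, even though every minor of the submatrix is a minor of the big matrix. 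So the needed fact — that passing to a set of columns of the invertible matrix $M$ of Definition \ref{defn cond FGLM} cannot increase the largest invariant factor — is true but requires an interlacing-type argument, and that argument is precisely an iteration of Lemma \ref{lem snf iteree}, which you set aside after (correctly) noting it does not apply to $[\mathbf{v}\mid v]$. Apply it instead ``upwards'': adjoin to $\mathbf{v}$ the columns $NF_\leq(x^\gamma)$ for the remaining $x^\gamma\in B_{\leq_2}$ one at a time; each intermediate matrix has full column rank since these columns are part of a basis of $K^\delta$, each adjunction can only increase the largest invariant factor, and the chain ends at $M$. This is exactly how the paper concludes.
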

\begin{proof}
The proof of the correction of the classical FGLM algorithm shows that, if $\mathbf{v}$ is a matrix whose columns are the $NF_\leq (x^\alpha)$ with $x^\alpha \in B_{\leq_2}$ and $ x^\alpha < x^\beta$ (written in the basis $B_\leq$), then  $NF_\leq (x^\beta) \in Im(\mathbf{v}).$

By applying the proof of the Proposition \ref{prop sys lin pas carre}, we obtain that $NF_\leq (x^\beta) \in \pi^{-cond(\mathbf{v})} Vect_R ( \{ NF_\leq (x^\alpha) \vert x^\alpha \in B_{\leq_2}, \:  x^\alpha < x^\beta \} )$.

Finally, Lemma \ref{lem snf iteree} implies that $cond(\mathbf{v}) \leq cond_{\leq, \leq_2}(I).$ The result is then clear.
\end{proof}

\subsubsection{Correction and termination}

We can now prove the correction and termination of Algorithm \ref{FGLM stabilise} under the assumption that the initial precision is enough. Which precision is indeed enough is addressed in the following Subsubsection.

\begin{prop} \label{prop correction et terminaison pour FGLM stabilise}
Let $G_1,\leq,\leq_2,B_\leq,B_{\leq_2}, I$ be as in Theorem \ref{thm stabilite fglm}.
Then, assuming that the coefficients of the polynomials of $G_1$ are all known up to a precision $O(\pi^N)$ for some $N \in \mathbb{Z}_{>0}$ big enough, the stabilized FGLM algorithm \ref{FGLM stabilise} terminates and returns a Gröbner basis $G_2$ of $I$ for $\leq_2$.
\end{prop}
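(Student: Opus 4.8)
The plan is to run the finite-precision Algorithm~\ref{FGLM stabilise} and the exact FGLM algorithm in lockstep and to show that, when $N$ is large enough, the two executions make exactly the same combinatorial choices at every iteration. Since over infinite precision correction and termination of Algorithm~\ref{FGLM stabilise} are already known (it differs from classical FGLM only in using SNF instead of reduced row-echelon form), this at once gives termination, the equality $\mathrm{card}(B_2)=\delta$ at the end (so that the \textbf{Error} branch is never reached), and the fact that the returned $G_2$ is an approximate Gröbner basis of $I$ for $\leq_2$: its elements are built from relations $v\in\mathrm{Im}(\mathbf{v})$, hence lie in $I$, and their leading monomials for $\leq_2$ coincide with those of the exact Gröbner basis of $I$ for $\leq_2$.

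The core is an induction on the iterations of the \texttt{while} loop, with invariant: the list $L$, the set of monomials $B_2$, the set $LM_{\leq_2}(G_2)$, and the matrix $\mathbf{v}$ (whose columns are the $NF_\leq(x^\alpha)$ for $x^\alpha\in B_2$, written in $B_\leq$) all coincide with their counterparts in the exact run, the entries of $\mathbf{v}$ being known to precision $O(\pi^{N'})$ with $N'=N'(N)\to\infty$ as $N\to\infty$. The multiplication matrices $T_1,\dots,T_n$ are computed once, from $G$, with a loss of precision depending only on $I$ and $\beta$ (Algorithm~\ref{calcul des matrices de multiplication}), so they too are known to a precision growing with $N$. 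At the inductive step, for the current monomial $x^\alpha=B_2[j]x_i$ one forms $v:=T_i\mathbf{v}[j]$, an approximation of $NF_\leq(x^\alpha)$, and $\lambda:=P_1 v$. By Lemma~\ref{lem snf iteree} the condition number of $\mathbf{v}$ stays bounded by $cond_{\leq,\leq_2}(I)$ throughout, so by Proposition~\ref{prop sys lin pas carre} (resp. Theorem~\ref{thm sys lin}) the SNF-based steps, and in particular the components $\lambda_{s+1},\dots,\lambda_\delta$, are obtained with a loss of at most $2\,cond_{\leq,\leq_2}(I)$ digits with respect to the precision of $\mathbf{v}$ and $v$.

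Two cases are then separated. If the exact run declares $x^\alpha$ linearly dependent on the smaller monomials, Lemma~\ref{lem ecriture avec snf dans fglm} shows $NF_\leq(x^\alpha)\in\pi^{-cond_{\leq,\leq_2}(I)}\,Vect_R(\{NF_\leq(x^\gamma)\mid x^\gamma\in B_{\leq_2},\,x^\gamma<x^\alpha\})$, so the exact values of $\lambda_{s+1},\dots,\lambda_\delta$ are zero; hence, as soon as $N$ exceeds $cond_{\leq,\leq_2}(I)$ plus the (bounded) propagated losses, the finite-precision $\lambda_{s+1},\dots,\lambda_\delta$ are recognized as $O(\pi^v)$, the dependence branch is taken, and the polynomial $B_2[j]x_i-\sum_l W_l B_2[l]$ is appended to $G_2$ with leading monomial $x^\alpha$, exactly as in the exact run. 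If instead the exact run declares $x^\alpha$ independent, then in exact arithmetic at least one of $\lambda_{s+1},\dots,\lambda_\delta$ has finite valuation; taking $N$ larger than the maximum, over the finitely many iterations of the exact run, of this valuation augmented by the propagated losses, the finite-precision $\lambda$ still carries a significant digit on some tail coordinate, so the independence branch is taken and $x^\alpha$ is added to $B_2$. In both cases $L$ is updated identically (removal of multiples of $LM_{\leq_2}(G_2)$, insertion of the new successors), the invariant is preserved, and when the loop terminates $B_2$ is the $\leq_2$-canonical basis of $A/I$, giving $\mathrm{card}(B_2)=\delta$ and the asserted output.

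The delicate point is the independence case: whereas the dependence case is governed by the explicit a~priori quantity $cond_{\leq,\leq_2}(I)$ coming from Lemma~\ref{lem ecriture avec snf dans fglm}, the threshold on $N$ needed to \emph{certify} independence depends on the exact execution itself and is here only asserted to exist; turning it into the explicit bound of Theorem~\ref{thm stabilite fglm} is precisely what the finer precision analysis of the next subsubsection will achieve. A secondary point to verify is that, in the dependence case, the membership $v\in\mathrm{Im}(\mathbf{v})$ holds \emph{exactly}, so that Proposition~\ref{prop sys lin pas carre} legitimately applies to the solve producing $W$.
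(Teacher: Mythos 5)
Your proposal is essentially workable, but it takes a genuinely different route from the paper's. You shadow the exact (infinite-precision) execution and argue that, for $N$ large enough, the finite-precision run takes the same branch at every pass of the \textbf{while} loop, so that correctness and termination are inherited from the exact algorithm. The paper never compares the two runs: it proves the one-sided loop invariant $B_2\subseteq B_{\leq_2}$ --- a monomial in the border of $B_{\leq_2}$ is always sent to the dependence branch, by Lemma~\ref{lem ecriture avec snf dans fglm} together with Lemma~\ref{lem snf iteree}, while a true basis monomial may at worst be wrongly discarded when precision is insufficient --- deduces termination unconditionally (at most $n\delta$ iterations, since $L$ stays inside $B_{\leq_2}\cup\mathscr{B}_{\leq_2}(I)$), and then uses the final test $card(B_2)=\delta$ as an a posteriori certificate that $B_2=B_{\leq_2}$, which retroactively legitimizes every relation stored in $G_2$; the quantification of ``$N$ big enough'' (so that the test indeed passes) is deferred to the precision analysis of the next subsubsection, just as you defer your independence threshold. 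Your approach proves more (exact agreement of the two runs, which makes the final check essentially redundant) at the price of a threshold that depends on the exact execution; the paper's approach proves less per iteration but is robust to precision failures and explains why the \textbf{Error} branch is part of the algorithm.

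One step of your argument needs shoring up. In both cases you infer the behaviour of the computed $\lambda=P_1v$ from the exact run's $\lambda$, but the matrix $P_1$ produced by the approximate SNF on perturbed data need not approximate the exact run's transformation (the pivot choices may differ), so ``the exact tail coordinates are zero, hence the computed ones show no significant digit'' and ``the exact tail has finite valuation, hence a significant digit survives'' are not immediate. For the dependence case, argue as the paper does, directly from $NF_\leq(x^\beta)\in\pi^{-cond_{\leq,\leq_2}}Vect_R(\{NF_\leq(x^\alpha)\})$ and $P_1\mathbf{v}Q_1=\Delta$: this bounds below the valuation of the tail of $P_1v$ by the working precision minus $cond_{\leq,\leq_2}$ for \emph{any} admissible $P_1$. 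For the independence case, note that $\min_{i>s}val\bigl((P_1v)_i\bigr)$ is bounded above, uniformly over all $P_1\in GL_\delta(R)$ with $val(\det P_1)=0$, by the valuation-distance of $NF_\leq(x^\beta)$ to the $K$-span of the previous columns (because $P_1^{-1}$ has entries in $R$), an intrinsic finite quantity attached to the iteration; taking the maximum of these finitely many quantities, plus the propagated losses you already control via Lemma~\ref{lem snf iteree} and Proposition~\ref{prop sys lin pas carre}, makes your threshold argument sound. Your final remark is correct: in the dependence case the membership $v\in Im(\mathbf{v})$ holds for the true values, which is exactly the hypothesis under which Proposition~\ref{prop sys lin pas carre} applies.
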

\begin{proof}
The computation of the multiplication matrices only involves multiplication and addition and the operation performed do not depend on the precision. This is similar for the computation of the $NF_{\leq}(x^\alpha)$ processed in the algorithm and obtained as product of $T_i$'s and $\mathbf{1}$. We may assume that all those $NF_{\leq}(x^\alpha)$, for $\vert x^\alpha \vert$, are obtained up to some precision $O(\pi^N)$ for some $N \in \mathbb{Z}_{>0}$ big enough. Subsubsection \ref{subsubsec:analysis-loss-fglm} gives a precise estimation on such an $N$ and when it is big enough. 

Let $M$ be the matrix whose columns are the $NF_\leq (x^\beta)$ for $x^\beta \in  B_{\leq_2}$. Let $cond_{\leq, \leq_2}$ be as in Definition \ref{defn cond FGLM}.

To show the result, we use the following loop invariant: at the beginning of each time in the \textbf{while} loop of Algorithm \ref{FGLM stabilise}, we have \textit{(i)}, $B_2 \subset B_{\leq_2}$ and \textit{(ii)} if $x^\beta = B_2[j] x_i$ (where $(j,i)=m$, $m$ taken at the beginning of the loop), then every monomial $x^\alpha <_2 x^\beta$ satisfies $x^\alpha \in B_{\leq_2}$ or $NF_{\leq}(x^\alpha) \in \pi^{N-cond_{\leq, \leq_2}} Vect_R(NF_\leq (B_{\leq_2}))+O(\pi^{N-cond_{\leq, \leq_2}}).$ Here, \penalty-3000 $O(\pi^{N-cond_{\leq, \leq_2}})$ is the $R$-module generated by the \penalty-30000 $\pi^{N-cond_{\leq, \leq_2}} \epsilon$'s for $\epsilon \in B_\leq.$

We begin by first proving that this proposition is a loop invariant.
It is indeed true when entering the first loop since $ 1 \in B_{\leq_2},$ for $I$ is zero-dimensional.

We then show that this proposition is stable when passing through a loop. Let $x^\beta = B_2[j] x_i$ with $(j,i)=m$.
By the way we defined it, $x^\beta $ is in the border of $B_2$ (\textit{i.e.} non-trivial multiple of a monomial of $B_2$). Since $B_2 \subset B_{\leq_2}$, we deduce that $x^\beta $ is also in either in $B_{\leq_2}$, or in the border of $B_{\leq_2}$, also denoted by $\mathscr{B}_{\leq_2}(I)$.

We begin by the second case. We then have, thanks to Lemma \ref{lem ecriture avec snf dans fglm},  $NF_\leq (x^\beta) \in \pi^{-cond_{\leq, \leq_2}} Vect_R ( \{ NF_\leq (x^\alpha) \vert x^\alpha \in B_{\leq_2}, \:  x^\alpha < x^\beta \} )$. Precision being finite, it tells us that $\lambda = P_1 v=P_1 NF_\leq (x^\beta)$ only appears with coefficients of the form $O(\pi^{l'})$ for its coefficients or row of index $i>s$. This corresponds to being in the image of $\Delta$.

Hence, the \textbf{if} test succeeds, and $x^\beta$ is not added to $B_2$. Points \textit{(i)} and \textit{(ii)} are still satisfied.

We now consider the first case, where $x^\beta \in B_{\leq_2}$. Once again, two cases are possible. The first one is the following: we have enough precision for, when computing $\lambda = P_1 v$ where $v=NF_{\leq}(x^\beta)$, we can prove that $v$ is not in $Vect(NF_\leq (B_{\leq_2}))$. In other words, we are in the \textit{else} case, and $x^\beta$ is rightfully added to $B_2$. The points \textit{(i)} and \textit{(ii)} remain satisfied.
In the other case: we do not have enough precision for, when computing $\lambda = P_1 v$ with $v=NF_{\leq}(x^\beta)$, we can prove that $v$ is not in $Vect(B_{\leq_2})$. In other words, numerically, we get $NF_\leq (x^\beta) \in \pi^{-cond(\mathbf{v})} Vect_R ( \{ NF_\leq (x^\alpha) \vert x^\alpha \in B_{\leq_2}, \:  x^\alpha < x^\beta \} )+O(\pi^{N-cond(\mathbf{v})})$. 
In that case, the \textbf{if} condition is successfully passed and, since $cond(\mathbf{v}) \leq cond_{\leq, \leq_2},$  the points \textit{(i)} and \textit{(ii)} remain satisfied.

This loop invariant is now enough to conclude this demonstration.
Indeed, since $B_2 \subset B_{\leq_2}$ is always satisfied, we can deduce that $L$ is always included in $B_{\leq_2} \cup \mathscr{B}_{\leq_2}(I)$, and since a monomial can not be considered more than once inside the \textbf{while} loop, there is at most $n \delta$ loops. Hence the termination.

Regarding correction, if the \textbf{if} test with $card(B_2)=\delta = card (B_{\leq_2})$ is passed, then, because of the inclusion we have proved, we have $B_2 = B_{\leq_2}$. In that case, the leading monomials which passed the \textbf{if} are necessarily inside the border of $\mathscr{B}_{\leq_2}(I)$, and can indeed be written in the quotient $A/I$ in terms of the monomials of $B_2$ smaller than them. In other words, the linear system solving with the assumption of membership indeed builds a polynomial in $I$. \textit{In fine}, $G_2$ is indeed a Gröbner basis of $I$ for $\leq_2$.

In the second case, where the \textbf{if} test is failed, with \penalty-10000 $card(B_2) \neq \delta$, then precision was not enough.
\end{proof}
\vspace{-0.5cm}
\begin{algorithm}
%\DontPrintSemicolon

 \SetKwInOut{Input}{Input}\SetKwInOut{Output}{Output}
\scriptsize

 \Input{$s \in \mathbb{Z}_{\geq 0}$. A matrix $\mathbf{v}$ of size $\delta \times s$, $P_1,Q_1,\Delta$ some matrices such that $P_1 \mathbf{v}' Q_1 = \Delta$ is an \textbf{approximate SNF} of $\mathbf{v}'$ with  $\mathbf{v}'$ the sub-matrix of $\mathbf{v}$ corresponding to its $s-1$ first columns. $P_2,Q_2$ are the inverses of $P_1,Q_1$.}
 \Output{$P_1,P_2,Q_1,Q_2,\Delta$ updated such that $P_1 \mathbf{v} Q_1 = \Delta$ is an approximate SNF of $\mathbf{v}$, and $P_2,Q_2$ are inverses of $P_1,Q_1$. }

Augment trivially the matrices $Q_1,Q_2$ into square invertible matrices with one more row and one more column \;
Compute $U_1,V_1$ and $\Delta'$ realizing an approximate SNF of $P_1 \mathbf{v} Q_1$, as well as $U_2,V_2$ the inverses of $U_1,V_1$ for Algorithm \ref{algo approx snf} \;
$P_1 := U_1 \times P_1$ \;
$Q_1 := Q_1 \times V_1$ \;
$P_2 := P_2 \times U_2$ \;
$Q_2 := V_2 \times Q_2$ \;
$\Delta := \Delta'$ \;

 \caption{Update, iterated approximate SNF} \label{Update FGLM stabilise}
\end{algorithm}
\vspace{-0.5cm}
\begin{algorithm}
%\DontPrintSemicolon

 \SetKwInOut{Input}{Input}\SetKwInOut{Output}{Output}
\scriptsize

 \Input{The reduced Gröbner basis $G$ of the zero-dimensional ideal $I \subset A$ for a monomial ordering $\leq$. $\deg I = \delta$. $B_\leq =(1=\epsilon_1 \leq  \epsilon_2 \leq \dots \leq \epsilon_\delta)$ the canonical basis of $A/I$ for $\leq.$ }
 \Output{The multiplication matrices $T_i$'s for $I$  and $\leq$.}

\BlankLine

\For{$i \in \llbracket 1,n \rrbracket$}{
	$T_i:= 0_{\delta \times \delta}$ \;}
$L:= [x_i \epsilon_k \vert i \in \llbracket 1,n \rrbracket \text{ et } \epsilon_k \in B_\leq ],$ ordered increasingly for $\leq$ with no repetition \;
\For{$u \in L$}{
\If{ $u \in \mathscr{E}_\leq (I)$}{$T_i[u,u/x_i]=1$ for all $i$ such that $x_i \vert u$  \;
\tcc{The column indexed by $u$ is zero, except on its coefficient indexed by $u/x_i$ }}
\ElseIf{$u=LM(g)$ for a certain $g \in G$}{
Write $g$ as $u+\sum_{k=1}^\delta a_k \epsilon_k$ \;
$T_i[\cdot,u/x_i]:= -{}^t (a_1,\dots,a_\delta)$ for all $i$ such that $x_i \vert u$ \; }
\Else{
Find the smallest $x_j$ for $\leq$ such that $x_j \vert u.$ \;
Let $v = u/x_j$ \;
Find $\epsilon$ and $l$ such that $v = x_l \epsilon$ \;
$V :=T_l[\cdot , v]$ (this column contains $NF_\leq (v)$) \;
$W := T_j V$ ($W$ is the vector corresponding to the normal form $NF_\leq (x_j v) = NF_\leq (u)$ )\;
$T_i[\cdot, u/x_i]:=W$ for all $i$ such that $x_i \vert u$ \;}

} 
Return $T_1,\dots,T_n$ \;

 \caption{Computation of the multiplication matrices} \label{calcul des matrices de multiplication}
\end{algorithm}
\vspace{-0.2cm}
\subsubsection{Analysis of the loss in precision}
\label{subsubsec:analysis-loss-fglm}

%Modifications dans cette partie

We can now analyse the behaviour of the loss in precision during the execution of the stabilized FGLM algorithm \ref{FGLM stabilise}, and thus estimate what initial precision is big enough for the execution to be without error. 
To that intent, we analyse the precision on the computation of the multiplication matrices and  we use the notion of condition number of Definition \ref{defn cond FGLM} to show that it can handle the behaviour of precision inside the execution the stabilized FGLM algorithm \ref{FGLM stabilise}. This is what is shown in the following propositions. 

%\begin{defn} \label{defn cond FGLM}
%Soit $I,G_1,\leq,\leq_2,B_\leq,B_{\leq_2}$ comme dans l'énoncé du théorème \ref{thm stabilite fglm}. Soit $M$ la matrice dont les colonnes sont les $NF_\leq (x^\beta)$ pour $x^\beta \in  B_{\leq_2}$. Nous définissons le conditionnement du changement d'ordre pour $I$ entre $\leq$ et $\leq_2$, noté $cond_{\leq, \leq_2}(I)$, par la plus grande valuation d'un facteur invariant de $M$.
%Nous noterons $cond_{\leq, \leq_2}(I)^+=\max (cond_{\leq, \leq_2}(I),0).$
%\end{defn}
%La définition de $cond_{\leq, \leq_2}(I)^+$ sera utile si l'on souhaite négliger les gains de précision.

%Nous pouvons maintenant montrer que cette notion de conditionnement est bien celle adaptée à notre version de l'algorithme FGLM :

\begin{lem}
Let $I,G_1,\leq,B_\leq, \delta, \beta$ be as defined when announcing Theorem \ref{thm stabilite fglm}. Then the coefficients of the multiplication matrices for $I$ are of valuation at least $n \delta \beta.$ \label{lem:val_of_Ti}
\end{lem}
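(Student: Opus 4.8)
The plan is to trace through Algorithm \ref{calcul des matrices de multiplication}, which builds the multiplication matrices $T_1,\dots,T_n$ by processing the monomials $u = x_i \epsilon_k$ in increasing order for $\leq$, and to bound the valuations of the columns $NF_\leq(u)$ by induction on this ordering. First I would set up the induction: the claim to prove is that every column $T_i[\cdot, v]$ computed at stage $v$ — equivalently, every vector $NF_\leq(u)$ for $u$ in the list $L$ — has all its coordinates (in the basis $B_\leq$) of valuation at least $c \cdot \beta$ for a suitable multiplicative constant $c$ depending on the position of $u$ in the processing order, and to check $c \le n\delta$ throughout. The base cases are the two easy branches of the \textbf{if}: when $u \in \mathscr{E}_\leq(I)$ (so $NF_\leq(u) = u$, a single coefficient equal to $1$, valuation $0$), and when $u = LM(g)$ for $g \in G_1$ (so $NF_\leq(u) = -{}^t(a_1,\dots,a_\delta)$, whose coordinates are coefficients of a polynomial of the reduced Gröbner basis, hence of valuation at least $\beta$ by the definition of $\beta$).

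The inductive step is the third branch. There $NF_\leq(u)$ is obtained as $T_j V$ where $V = NF_\leq(v)$ for $v = u/x_j$ a strictly smaller monomial (already processed), and $T_j$ is the multiplication-by-$x_j$ matrix \emph{restricted to the columns already filled in}. The key observation is that each column of $T_j$ that can appear in this product is itself one of the vectors $NF_\leq(w)$ for $w$ earlier in the order (or a base-case column), so by the induction hypothesis its entries have valuation at least (some bound)$\cdot\beta$; a matrix–vector product over $R$ adds valuations, so the valuation of $T_j V$ is at least the sum of the two bounds. Since $u$ has total degree at most... more carefully: the number of times we can "peel off a variable" before reaching either a standard monomial or a leading monomial of $G_1$ is bounded, and each peeling contributes at least $\beta$ to the valuation; combined with the fact that there are at most $n\delta$ monomials in $L$ and each multiplication step increases the "accumulated valuation" by a controlled amount, one gets the uniform bound $n\delta\beta$. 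I would make the bookkeeping precise by assigning to each $u \in L$ its rank $r(u) \in \{1,\dots,n\delta\}$ in the processing order and proving $\mathrm{val}(NF_\leq(u)) \ge r(u)\,\beta$ by strong induction, using that in the third branch $v = u/x_j$ and $V = NF_\leq(v)$ and the columns of $T_j$ used all have rank strictly less than $r(u)$, so the product has valuation at least $\bigl(r(v) + (\text{rank of the }T_j\text{ column})\bigr)\beta \ge \beta \cdot$ (something)$\ge r(u)\beta$ once one checks the crude inequality $r(v) + r(w) \ge r(u)$ fails in general — so instead the honest bound is additive in a telescoping way and the worst case is $n\delta\beta$.

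The main obstacle is precisely this last bookkeeping: one must argue that the recursion defining $NF_\leq(u)$ in the third branch has bounded depth and that the valuations accumulate additively without ever exceeding $n\delta\beta$. The clean way is: every $u$ processed in the third branch satisfies $u = x_j v$ with $v$ strictly smaller, and unwinding the recursion expresses $NF_\leq(u)$ as a product $T_{j_1}\cdots T_{j_r}\,(\text{a base-case column})$ where $r < n\delta$ (since the $\epsilon$'s indexing intermediate columns are strictly decreasing and there are only $\delta$ of them, repeated over $n$ variables), each $T_{j_\ell}$ has entries in $R$ with the columns actually hit being base-case columns of valuation $\ge \beta$ — no, the inner columns need not be base-case. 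The correct and simplest resolution, which I would adopt, is to observe that all multiplication matrices have \emph{integral} entries after clearing the content once: each column is $NF_\leq(u)$, and since the $T_i$ are defined by the Gröbner basis $G_1$ whose coefficients have valuation $\ge \beta$, an induction on the rank shows column of rank $k$ has valuation $\ge \lceil k/n \rceil \beta$ or simply $\ge \beta$ already suffices for many columns, and the absolute worst is bounded by the number of matrix multiplications, $\le n\delta$, giving $\ge n\delta\beta$... I would present it as: by induction on the processing order, $\mathrm{val}(NF_\leq(u)) \ge (\text{number of non-trivial reduction steps used to compute it})\cdot\beta$, and this number is at most $n\delta$ because the monomial strictly decreases and the $\epsilon$-part ranges over $B_\leq$. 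This combinatorial bound is the only non-routine point; everything else is additivity of valuation under products of integer matrices.
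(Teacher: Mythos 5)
Your overall strategy --- inducting along the processing order of Algorithm \ref{calcul des matrices de multiplication} and letting valuations accumulate through the matrix--vector products --- does not reach the stated bound, and you in fact notice the obstruction without resolving it. In the third branch the new column is $W=T_jV$ with $V=NF_\leq(v)$, and the columns of $T_j$ that actually enter this product are themselves normal forms computed earlier in the third branch, not base-case columns (as you concede mid-proposal). Hence, if $s(u)\beta$ denotes the valuation lower bound accumulated for $NF_\leq(u)$ (recall $\beta\leq 0$, since the leading coefficients of the reduced basis are $1$), the honest recursion is $s(u)\geq s(v)+\max_{\epsilon} s(x_j\epsilon)$: the two previously accumulated losses \emph{add}, they do not increase by one. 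Your final formulation, that $val(NF_\leq(u))\geq(\text{number of non-trivial reduction steps})\cdot\beta$ with this number at most $n\delta$ ``because the monomial strictly decreases'', silently replaces this sum by a step count; nothing in the proposal bounds the accumulated quantity by $n\delta$, and a recursion of the shape $s(u)=s(w_1)+s(w_2)$ with $w_1,w_2<u$ can a priori grow exponentially (Fibonacci-like) in the number of border monomials. So the one point you yourself identify as non-routine is precisely the gap.

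The paper sidesteps the algorithmic recursion altogether. It notes that every needed vector $NF_\leq(x^\alpha)$, for $x^\alpha\in\{x_i\epsilon : \epsilon\in B_\leq\}\setminus B_\leq$, can be read off the reduced row-echelon form of a single Macaulay-type matrix $\mathrm{Mac}$ with at most $n\delta$ rows and columns, whose rows are multiples $x^\alpha g$, $g\in G_1$, already in row-echelon form with unit pivots and all other entries of valuation at least $\beta$. The entries of the reduced row-echelon form are, up to the unit pivot minors, minors of $\mathrm{Mac}$ of size at most $n\delta$, i.e.\ sums of products of at most $n\delta$ entries, hence of valuation at least $n\delta\beta$. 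This global, Cramer-type argument is what produces the linear bound $n\delta\beta$; an algorithm-tracing proof would require a genuinely new potential-function argument to control the additive accumulation, which the proposal does not supply.
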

\begin{proof}
$G_1$ is a reduced Gröbner basis of a zero-dimensional ideal. Hence, it is possible to build a Macaulay matrix $\text{Mac}$ with columns indexed by the monomials of $\text{mon} :=\{ X_i \times \epsilon, i \in \llbracket 1, n\rrbracket, \epsilon \in B_\leq   \}$, in decreasing order for $\leq$, and rows of the form $x^\alpha g$, with $x^\alpha$ a monomial and $g \in G$, such that this matrix is under row-echelon form, (left)-injective and all monomials in $\text{mon} \cap LM_{\leq}(I)$ are leading monomial of exactly one row of $\text{Mac}$.
Since $G_1$ is a reduced Gröbner basis, the first non-zero coefficients of the rows are $1$ and all other coefficients are of valuation at least $\beta.$ $\text{Mac}$ has at most $n \delta$ columns and rows.

The computation of the reduced row-echelon form of $\text{Mac}$ yields a matrix whose coefficients are of valuation at least $n \delta \beta$, except the first non-zero coefficient of each row which is equal to $1.$

$NF_\leq (x^\alpha)$ for $x^\alpha \in \text{mon} \setminus B_\leq$ can then be read on the row of $\text{Mac}$ of leading monomial $x^\alpha$. It proves that the coefficients of such a $NF_\leq (x^\alpha)$ are of valuation at least $n \delta \beta.$ The result is then clear.
\end{proof}

\begin{prop}
Let $I,G_1,\leq,\leq_2,B_\leq,B_{\leq_2}$ be as defined when announcing Theorem \ref{thm stabilite fglm}. Let $M$ be the matrix whose columns are the $NF_\leq (x^\beta)$ for $x^\beta \in  B_{\leq_2}$. 
%Let $\beta = v_{\text{Gauss}}(G_1).$
Then, if the coefficients of the polynomials of $G_1$ are all known up to some precision $O(\pi^N)$ with $N \in \mathbb{Z}_{>0}$, $N>cond_{\leq, \leq_2}(M)+n^2(\delta+1)^2 \beta$, the stabilized FGLM algorithm \ref{FGLM stabilise} terminates and returns an approximate  Gröbner basis $G_2$ of $I$ for $\leq_2$. The coefficients of the polynomials of $G_2$ are known up to precision $N-n^2(\delta+1)^2 \beta-2cond_{\leq, \leq_2}(M)$.
\end{prop}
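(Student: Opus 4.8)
The plan is to obtain both assertions — that $N>cond_{\leq,\leq_2}(M)+n^2(\delta+1)^2\beta$ is ``big enough'' for Proposition~\ref{prop correction et terminaison pour FGLM stabilise}, and that the output precision is $N-n^2(\delta+1)^2\beta-2cond_{\leq,\leq_2}(M)$ — from one precision bookkeeping split along the two phases of Algorithm~\ref{FGLM stabilise}: the computation of the multiplication matrices $T_1,\dots,T_n$ together with the normal forms $NF_\leq(x^\alpha)$, and then the iterated-SNF \textbf{while} loop. Correctness and termination are not re-proved from scratch: they are exactly Proposition~\ref{prop correction et terminaison pour FGLM stabilise}, which applies — with its role of $N$ played by the degraded precision surviving phase~1 — as soon as the dependency tests $\lambda=P_1v$ remain meaningful, i.e. as soon as a strictly positive working precision survives both phases; so everything reduces to accounting for the lost precision.

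\emph{Phase 1 (multiplication matrices and normal forms).} Here I would invoke Lemma~\ref{lem:val_of_Ti}: the coefficients of the $T_i$, equivalently those of every $NF_\leq(x^\alpha)$ for $x^\alpha\in\text{mon}\setminus B_\leq$, have valuation at least $n\delta\beta$, and the only operations producing them are the row reduction of a Macaulay matrix with at most $n\delta$ rows and columns (entries of valuation $\ge\beta$, pivots equal to $1$). Tracking precision through that reduction, and then through the chains of multiplications by the $T_i$ needed to reach every monomial that enters the list $L$ — which stays inside $B_{\leq_2}\cup\mathscr{B}_{\leq_2}(I)$ and therefore never exceeds total degree of order $n(\delta+1)$ — one bounds the cumulative loss of absolute precision by $n^2(\delta+1)^2\beta$: there are $O(n\delta)$ such monomials, each reached by $O(n(\delta+1))$ successive matrix–vector products, and in the worst case a single such product shifts precision by at most $\beta$ (combining the valuation bound of Lemma~\ref{lem:val_of_Ti} with a normalization), never more. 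Writing $N':=N-n^2(\delta+1)^2\beta$, the outcome of this phase is that every $NF_\leq(x^\alpha)$ used, and hence the matrix $\mathbf{v}$ at each iteration of the loop, is known to precision $O(\pi^{N'})$, with $N'>cond_{\leq,\leq_2}(M)$ exactly under the stated hypothesis.

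\emph{Phase 2 (the iterated-SNF loop).} At every step of the loop the matrix $\mathbf{v}$ is obtained from $M$ by selecting columns (its columns are the $NF_\leq(x^\alpha)$ for $x^\alpha\in B_2\subseteq B_{\leq_2}$), and it is of full rank; so applying Lemma~\ref{lem snf iteree} inductively, one column at a time, yields $cond(\mathbf{v})\le cond_{\leq,\leq_2}(M)$ throughout. Consequently the approximate-SNF update of Algorithm~\ref{Update FGLM stabilise}, the passage to the exact SNF of $\mathbf{v}$ by Algorithm~\ref{algo snf precisee}, and the computation of the inverses by Proposition~\ref{prop snf inverses} keep $P_1,Q_1,P_2,Q_2,\Delta$ at precision $O(\pi^{N'-cond_{\leq,\leq_2}(M)})$, which is positive under the hypothesis; hence each test $\lambda=P_1v$ still carries significant digits and decides membership of $v$ in $Vect_R(NF_\leq(B_2))$ exactly as in the proof of Proposition~\ref{prop correction et terminaison pour FGLM stabilise} (using Lemma~\ref{lem ecriture avec snf dans fglm} on the dependent side), so correctness and termination follow. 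On a dependency, the coefficients of the new element of $G_2$ are the solution $W$ of $\mathbf{v}W=v$ with $v\in\mathrm{Im}(\mathbf{v})$, recovered from the SNF of $\mathbf{v}$; by Proposition~\ref{prop sys lin pas carre} this costs a further $2\,cond(\mathbf{v})$, so $W$ — hence each coefficient of $G_2$ — is known to precision $O(\pi^{N'-2cond(\mathbf{v})})$ and
\[
N'-2cond(\mathbf{v})\ \ge\ N'-2cond_{\leq,\leq_2}(M)\ =\ N-n^2(\delta+1)^2\beta-2cond_{\leq,\leq_2}(M),
\]
which is the announced bound.

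\textbf{Main obstacle.} Phase~2 is mostly the assembly of earlier results (Lemmas~\ref{lem snf iteree} and~\ref{lem ecriture avec snf dans fglm}, Propositions~\ref{prop snf inverses} and~\ref{prop sys lin pas carre}), so the real work is phase~1: proving that the loss incurred while building the $T_i$ and iterating them to produce all needed $NF_\leq(x^\alpha)$ is genuinely bounded by $n^2(\delta+1)^2\beta$. This requires a careful count of the number and type of the $K$-operations involved — the size $\le n\delta$ of the Macaulay matrix and the length $O(n(\delta+1))$ of the multiplication chains needed to reach any monomial that ever enters $L$ — together with a per-operation precision estimate anchored on the valuation lower bound $n\delta\beta$ of Lemma~\ref{lem:val_of_Ti}; the remaining manipulations are routine.
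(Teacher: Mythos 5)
Your overall route is the paper's: the same two-phase split (multiplication matrices and normal forms first, then the iterated-SNF loop), the same key ingredients (Lemma~\ref{lem:val_of_Ti}, Lemma~\ref{lem snf iteree}, Proposition~\ref{prop sys lin pas carre}, correctness and termination delegated to Proposition~\ref{prop correction et terminaison pour FGLM stabilise}), and the same final assembly giving $N-n^2(\delta+1)^2\beta-2cond_{\leq,\leq_2}(M)$. Your phase~2 is essentially verbatim the paper's argument.

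The gap sits exactly where you flag ``the real work'' and leave it unfinished, and the sketch you give there would not survive as written. Your count --- $O(n\delta)$ monomials, times chains of $O(n(\delta+1))$ products, times a per-product shift of at most $\beta$ --- is inconsistent with the lemma you invoke: Lemma~\ref{lem:val_of_Ti} only bounds the valuation of the entries of the $T_i$ below by $n\delta\beta$, so a single product $T_i w$ can shift the absolute precision of $w$ by $n\delta\beta$, not by $\beta$, and no ``normalization'' is available since the $T_i$ are determined by $I$. Moreover, absolute precision loss on a given coefficient accumulates only along that coefficient's own computational chain; multiplying by the number of monomials treated has no meaning for precision and does not compensate the underestimated per-product shift. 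The paper's accounting is the correct (and shorter) one: Algorithm~\ref{calcul des matrices de multiplication} performs at most $n\delta$ matrix--vector products whose coefficients have valuation at least $n\delta\beta$, so the $T_i$ are known at precision $N-(n\delta)^2\beta$; each normal form $v$ then requires $\deg(v)$ further products by the $T_i$, hence is known at precision $N-(n\delta)^2\beta-\deg(v)\,n\delta\beta$, which is lower-bounded by $N-n^2(\delta+1)^2\beta$. Once this estimate replaces your phase-1 sketch, the rest of your argument ($cond(\mathbf{v})\leq cond_{\leq,\leq_2}(M)$ by Lemma~\ref{lem snf iteree}, loss of $2\,cond(\mathbf{v})$ in the system solving by Proposition~\ref{prop sys lin pas carre}) coincides with the paper and closes the proof.
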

\begin{proof}
We first analyse the behaviour of precision for the computation of the multiplication matrices. There are at most $n\delta$ matrix-vector multiplication in the execution of Algorithm \ref{calcul des matrices de multiplication}. The coefficients involved in those multiplication are of valuation at least $n \delta \beta$ thanks to Lemma \ref{lem:val_of_Ti}. Hence, the coefficients of the $T_i$ are known up to precision $O(\pi^{N-(n \delta)^2 \beta}).$

We now analyse the exection of Algorithm \ref{FGLM stabilise}.
The computation of $v$ involves the multiplication of $\deg (v)$ $T_i$'s and $\mathbf{1}.$
Hence, $v$ is known up to precision $O(\pi^{N-(n \delta)^2 \beta-\deg (v) n \delta \beta}),$ which can be lower-bounded by $O(\pi^{N-(\delta+1)^2 n^2 \beta}).$

As a consequence, all coefficients of $M$ are known up to precision $O(\pi^{N-(\delta+1)^2 n^2 \beta})$ and this is the same for its approximate SNF.

Now, we can address the loss in precision for the linear system solving. Thanks to Proposition \ref{prop sys lin pas carre}, and with the membership assumption of $v$ to $Im(\mathbf{v})$, a precision $O(\pi^N)$ with $N$ strictly bigger than $(\delta+1)^2 n^2 \beta$ plus the biggest valuation $c$ of an invariant factor of  $\mathbf{v}$ is enough to solve the linear system $\mathbf{v}W=v,$ and the coefficients of $W$ are determined up to precision $O(\pi^{N-n^2(\delta+1)^2 \beta-2c})$.
The Lemma \ref{lem snf iteree} then allows us to conclude that at any time, $c \leq cond_{\leq, \leq_2}(M)$, hence the result.
\end{proof}

%Fin des modifs

%% \subsubsection{Un exemple}

%% \begin{expl} \label{expl fglm numerique}
%% Nous appliquons l'Algorithme FGLM stabilisé \ref{FGLM stabilise} sur la famille de polynôme donnée par $F=(f_1,f_2,f_3) \in \mathbb{Q}_2[x,y,z]$ avec les $f_i$ connus à précision $O(2^{10})$ : $f_1=(2+O(2^{10}))x+(1+O(2^{10}))z,$ $f_2=(1+O(2^{10}))x^2+(1+O(2^{10}))y^2-(2+O(2^{10}))z^2$ et $f_3=(4+O(2^{10}))y^2+(1+O(2^{10}))yz+(8+O(2^{10}))z^2.$ Nous prenons $D=3$, la borne de Macaulay.

%% Nous calculons d'abord une base de Gröbner réduite pour grevlex avec la variante de l'algorithme F5-Matriciel de \cite{Vaccon:2014}, et obtenons :
%% \[\left( x + (2^{-1} + O(2^8))z, y^2 + (2^{-2} + 2 + 2^2 + 2^3 + 2^4 + 2^5 + O(2^6))z^2, yz + (1 + 2 + 2^2 + 2^3 + O(2^8))z^2, z^3 \right). \]

%% L'escalier est alors $[1, z, y, z^2].$ Nous appliquons maintenant l'Algorithme FGLM stabilisé \ref{FGLM stabilise} à partir de cette base de Gröbner réduite pour en déduire une pour l'ordre lexicographique avec $z>y>x$.
%% Nous obtenons la matrice suivante comme forme normale de Smith de la matrice de changement de base :
%% \[\begin{bmatrix}
%% 2^{-2}& 0 & 0 & 0 \\
%% 0 & 2^{-1}& 0 & 0 \\
%% 0 & 0 & 1& 0 \\
%% 0 & 0 & 0 & 1\\
%% \end{bmatrix}. \]

%%  Le nouvel escalier est $[1, x, x^2, y]$. Nous obtenons alors comme base de Gröbner (rendue minimale) en sortie :
%%  \[ \left( x^3, xy + (2 + 2^5 + 2^6 + 2^7 + 2^8 + O(2^9))x^2, y^2 + (1 + 2^3 + 2^4 + 2^5 + 2^6 + 2^7 + O(2^8))x^2, z + (2 + O(2^{10}))x\right).\]
%% \end{expl}

\subsubsection{Complexity}

To conclude the proof of Theorem \ref{thm stabilite fglm}, what remains is to give an estimation of the complexity of Algorithm \ref{FGLM stabilise}. Regarding to the computation of multiplication matrices, there is no modification concerning complexity, and what we have to study is only the complexity of the iterated SNF computation. This is done in the following lemma:

\begin{lem} \label{lem iteraton snf}
Let $1 \leq s \leq \delta$ and $prec$ be integers, $k \in \llbracket 1, s \rrbracket$ and $M,C^{(k)}$ be two matrices in $M_{\delta \times s}(K).$ We assume that the coefficients of $M$ satisfies $M_{i,j}=m_{i,j} \delta_{i,j}+O(\pi^{prec})$ for some $m_{i,j} \in K$ and the coefficients of $C^{(k)}$ satisfies $C_{i,j}^{(k)}=c_{i,j} \delta_{j,k}+O(\pi^{prec})$ for some $c_{i,j} \in K$.
Let $C_{SNF}(M+C^{(k)})$ be the number of operations in $K$ (at precision $O(\pi^{prec})$) applied on rows and columns to compute an approximate SNF for $M+C^{(k)}$ at precision $O(\pi^{prec}).$ Then $C_{SNF}(M+C^{(k)})\leq s \delta.$
\end{lem}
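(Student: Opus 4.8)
The plan is to run the pivoting procedure of Algorithm~\ref{algo approx snf} on the matrix $N := M + C^{(k)}$ and to turn its very rigid shape into the bound. After permuting the columns so that the index $k$ becomes the last one (no arithmetic in $K$ being involved), $N$ looks, modulo $O(\pi^{prec})$, as follows: for $j < s$ its $j$-th column is the single-entry vector $m_{j,j}\,e_j$ (with $e_j$ the $j$-th standard basis vector), while its last column is an arbitrary vector $w = (w_1,\dots,w_\delta)$. In particular every row of $N$ carries at most two nonzero entries --- a ``diagonal'' one (only for the rows $1,\dots,s-1$) and the entry of the special column $w$ --- and the bottom $\delta - s + 1$ rows meet only $w$.

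The heart of the argument is that this shape --- ``$q-1$ columns, each a scalar multiple of a distinct standard basis vector, plus one arbitrary column'' --- is preserved along the recursion. First I would record that, since Algorithm~\ref{algo approx snf} always pivots on an entry of minimal valuation, every division it performs ($w_j/m_{j,j}$, $m_{a,a}/w_a$, or a ratio of two special-column entries) has its quotient in $R$, so all its operations are legitimate. Then I would analyse one pivot step. If the pivot lies in a diagonal column, that column has no other entry and its row has at most one other entry, so the step costs $O(1)$ operations and the $(\delta-1)\times(q-1)$ submatrix has the very same shape. If the pivot lies in the special column, say at row $a$, then clearing that column costs one constant-length row operation per remaining row; the pivot row has at most one further entry, namely $m_{a,a}$ when $a\le s-1$, so clearing it costs at most one column operation; and the only fill-in produced lands in the single column that carried $m_{a,a}$, which thereby becomes the new special column of the submatrix --- the other columns remaining equal to the untouched $m_{\ell,\ell}\,e_\ell$. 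In all cases one normalisation of the pivot suffices to reach the diagonal form required by Definition~\ref{FNS approchee}.

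Summing over the recursion, which performs at most $\min(\delta,s)=s$ pivot steps, the $t$-th step (where $\delta-t+1$ rows are left) uses at most $\delta-t$ row operations to clear the special column, at most one column operation to clear the pivot row, and the normalisation, while never revisiting the already-diagonalised block. A direct accounting of these contributions gives $C_{SNF}(M+C^{(k)}) \le \sum_{t=1}^{s}(\delta-t+1) = s\delta - \frac{s(s-1)}{2} \le s\delta$.

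I expect the bookkeeping of this last paragraph to be the main (and essentially only) obstacle. The structural invariant of the second paragraph is exactly what is needed to guarantee that no row ever accumulates more than one entry outside the special column, so that ``one column operation per step'' genuinely suffices; granting that, one only has to be careful about where the $O(s)$ normalisation operations are charged --- for instance performing all normalisations at the very end and noticing that the last pivot step involves no column operation --- so that the constant comes out as $s\delta$ rather than $s\delta + O(s)$.
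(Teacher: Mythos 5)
Your argument is correct and is essentially the paper's own proof: the paper also runs Algorithm \ref{algo approx snf} on the ``diagonal plus one arbitrary column'' shape, does the same two-case pivot analysis (pivot in a diagonal column costs one column operation; pivot in the special column costs at most $\delta-1$ row operations plus one column operation, with the fill-in turning the displaced diagonal column into the new special column), and concludes by induction on $s$ with the bound $\delta+(\delta-1)(s-1)\leq s\delta$, which is the inductive form of your direct summation. Your remarks on quotients lying in $R$ and on charging the normalisations are harmless refinements; the paper simply does not count normalisations among the row/column operations.
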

\begin{proof}
We show this result by induction on $s$.
For $s=1$, for any $\delta, prec,k,M$ and $C^{(k)}$, the result is clear.

Let us assume that for some $s \in \mathbb{Z}_{>0}$, we have for any $\delta, prec, k$, $M$ and $C^{(k)} \in M_{\delta \times (s-1)}(K)$ as in the lemma, $C_{SNF}(M+C^{(k)})\leq (s-1) \delta.$

Then, let us take some $\delta \geq s,$  $k \in \llbracket 1, s \rrbracket$ and $rec \in \mathbb{Z}_{\geq 0}.$ Let $M,C^{(k)}$ be two matrices in $M_{\delta \times s}(K)$ such that their coefficients satisfies $M_{i,j}=m_{i,j} \delta_{i,j}+O(\pi^{prec}),$ for some $m_{i,j} \in K,$ and $C_{i,j}^{(k)}=c_{i,j} \delta_{j,k}+O(\pi^{prec})$ for some $c_{i,j} \in K$. Let $N=M+C^{(k)}.$

We apply Algorithm \ref{algo approx snf} until the recursive call. 
Let us assume that the coefficient used as pivot, that is, one $N_{i,j}$ which attains the minimum of the $val(N_{i,j})$'s,  is $N_{1,1}$. Then $1$ operation on the columns is done when going through the two consecutives \textbf{for} loops in Algorithm \ref{algo approx snf}.
The only other case is that of pivot being some $N_{i,k}$ for some $i$. Then $\delta-1$ operations on the rows and $1$ operation on the columns are done.

The matrix $N'=\widetilde{N}_{i \geq 2, j \geq 2}$ can be written $N'=M'+C^{'(k)}$ with $M'$ and $C^{'(k)}$ in $M_{(\delta-1) \times (s-1)}(K)$ of the desired form, for $k=s-1$ if the pivot $N_{i,j}$ is $N_{1,1}$ and $k=i$ if it is $N_{i,s}$.
By applying the induction hypothesis on $N'$, we obtain that $C_{SNF}(M+C^{(k)})\leq \delta+(\delta-1)\times (s-1) \leq \delta s.$

The result is then proved by induction.
\end{proof}

We then have the following result regarding the complexity of Algorithm \ref{FGLM stabilise} :

\begin{prop} \label{prop complexite fglm stabilise}
Let $G_1$ be an approximate reduced Gröbner basis, for some monomial ordering $\leq,$ of some zero-dimensional $I \subset A$ of degree $\delta$, and let $\leq_2$ be some monomial ordering. We assume that the coefficients of $G_1$ are known up to precision $O(\pi^N)$ for some $N>cond_{\leq,\leq_2}$. Then, the complexity of the execution of Algorithm \ref{FGLM stabilise} is in $O(n \delta^3)$ operations in $K$ at absolute precision $O(\pi^N)$.
\end{prop}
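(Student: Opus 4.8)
The plan is to account for the complexity of Algorithm \ref{FGLM stabilise} by splitting its execution into three phases: the precomputation of the multiplication matrices, the main \textbf{while} loop, and the final check. First I would invoke the standard analysis of the classical FGLM algorithm: the multiplication matrices $T_1,\dots,T_n$ are built by Algorithm \ref{calcul des matrices de multiplication}, which performs at most $n\delta$ iterations, each doing a single matrix-vector product of size $\delta\times\delta$, hence $O(n\delta^3)$ operations in $K$ at precision $O(\pi^N)$. This already matches the target bound, so the remaining task is to show that the main loop does not exceed it.

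For the main loop, I would argue that there are at most $n\delta$ iterations (as established in the proof of Proposition \ref{prop correction et terminaison pour FGLM stabilise}, since every monomial lies in $B_{\leq_2}\cup\mathscr{B}_{\leq_2}(I)$ and is treated at most once). Each iteration does: one matrix-vector product $v := T_i\mathbf{v}[j]$, in $O(\delta^2)$; one product $\lambda := P_1 v$, in $O(\delta^2)$; and then either the \emph{else} branch (adding a column, updating the iterated SNF via Algorithm \ref{Update FGLM stabilise}), or the \emph{if} branch (precising the SNF via Algorithm \ref{algo snf precisee} in $O(\delta^2)$ and then solving the linear system via Proposition \ref{prop sys lin pas carre} in $O(\delta s\max(s,\delta))=O(\delta^3)$). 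The linear-system-solving branch is executed at most $t\le\delta$ times (once per Gröbner basis element produced), contributing $O(\delta^4)$ naively — but in fact the leading monomials of $G_2$ are bounded in number by $n$ times\ldots no: more carefully, each element of $G_2$ has a distinct leading monomial that is a minimal generator of $LM_{\leq_2}(I)$, and there are at most $\delta$ of these, so the cost is $O(\delta)\cdot O(\delta^3)=O(\delta^4)$, which would be too much. The fix is the same one used to state the theorem: the system $\mathbf{v}W=v$ is solved not from scratch but from the already-maintained approximate SNF of $\mathbf{v}$, so after the $O(\delta^2)$ call to Algorithm \ref{algo snf precisee} the back-substitution is only $O(\delta^2)$; summed over $\le\delta$ occurrences this is $O(\delta^3)$, and summed over all $\le n\delta$ iterations the product and SNF updates give $O(n\delta)\cdot O(\delta^2)=O(n\delta^3)$.

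The key point making the iterated-SNF updates cheap is Lemma \ref{lem iteraton snf}: when we adjoin a new column $v$ to $\mathbf{v}$ and conjugate, the matrix $P_1\mathbf{v}Q_1$ presented to Algorithm \ref{algo approx snf} has the special shape $M+C^{(k)}$ (a diagonal matrix plus a single modified column), so computing its approximate SNF costs only $O(s\delta)$ row/column operations on vectors of length $\delta$, i.e.\ $O(s\delta^2)=O(\delta^3)$ arithmetic operations, and updating $P_1,P_2,Q_1,Q_2$ by the small transforms $U_1,V_1,U_2,V_2$ costs $O(\delta^2)$ per iteration. Over all $\le n\delta$ iterations this totals $O(n\delta^3)$. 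Hence the dominant term across all three phases is $O(n\delta^3)$, as claimed.

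The step I expect to be the main obstacle is the bookkeeping for the iterated SNF: one must verify that the matrix handed to Algorithm \ref{Update FGLM stabilise} really does have the ``diagonal plus one column'' structure required by Lemma \ref{lem iteraton snf} at every iteration — in particular that the previously accumulated $P_1,Q_1$ keep $\mathbf{v}'$ in (approximate) SNF so that conjugating $\mathbf{v}=[\mathbf{v}'\mid v]$ by them yields exactly $\Delta'$ with one perturbed column — and that the sizes of all the transform matrices stay $O(\delta)$ so that their accumulation does not blow up. Once that structural invariant is in place, the complexity count is a routine summation.
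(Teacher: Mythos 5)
Your overall decomposition coincides with the paper's (Algorithm \ref{calcul des matrices de multiplication} in $O(n\delta^3)$; at most $n\delta$ passes through the \textbf{while} loop via the termination argument of Proposition \ref{prop correction et terminaison pour FGLM stabilise}; per-pass cost controlled by Lemma \ref{lem iteraton snf} for the iterated SNF and by the already-maintained approximate SNF for the system solving). But your accounting of the SNF updates has a genuine gap. You price each of the $O(s\delta)$ row/column operations of Lemma \ref{lem iteraton snf} at $O(\delta)$ field operations, obtaining $O(s\delta^2)=O(\delta^3)$ per call to Algorithm \ref{Update FGLM stabilise}, and then assert that ``over all $\le n\delta$ iterations this totals $O(n\delta^3)$''. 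It does not: the update is executed once for each monomial adjoined to $B_2$, i.e.\ up to $\delta$ times, so your own figures give $O(\delta^4)$ (and $O(n\delta^4)$ if you charge $O(\delta^3)$ to every pass), which is not $O(n\delta^3)$ since $\delta$ is in general much larger than $n$. The way the paper uses Lemma \ref{lem iteraton snf} is precisely the opposite of your pricing: the lemma's bound $C_{SNF}\le s\delta$ counts \emph{operations in $K$}, the point being that the matrix $P_1\mathbf{v}Q_1$ handed to Algorithm \ref{algo approx snf} is diagonal plus a single column, so each row/column operation only touches $O(1)$ significant coefficients. This gives $O(\delta^2)$ per update and, together with $O(\delta^2)$ per linear solve (SNFPrecised plus back-substitution on an already-computed approximate SNF), times at most $n\delta$ passes, the stated $O(n\delta^3)$. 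As written, your $O(\delta)$-per-operation cost undoes exactly the saving the lemma is designed to provide, so your argument does not reach the claimed bound.

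Two smaller points. Your claim that the \textbf{if} branch fires at most $\delta$ times because $LM_{\leq_2}(I)$ has at most $\delta$ minimal generators is false in general (the minimal generators lie in the border $\mathscr{B}_{\leq_2}(I)$, which can have up to $n\delta$ elements; think of $\langle x_1,\dots,x_n\rangle$, where $\delta=1$), but this is harmless: $n\delta$ solves at $O(\delta^2)$ each still give $O(n\delta^3)$. Finally, you are right to flag the cost of updating $P_1,P_2,Q_1,Q_2$; stating it is $O(\delta^2)$ per update again requires the structural/sparsity argument attached to Lemma \ref{lem iteraton snf}, and this is exactly the bookkeeping your proposal leaves unverified.
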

\begin{proof}
Firstly, we remark that the computation of the matrices of multiplication is in $O(n \delta^3)$ operations at precision $O(\pi^N)$.
Now, we consider what happens inside the \textbf{while} loop in Algorithm \ref{FGLM stabilise}. The computation of approximate SNF through Algorithm \ref{Update FGLM stabilise} are in $O(\delta^2)$ operations at precision $O(\pi^N)$ thanks to Lemma \ref{lem iteraton snf}. The solving of linear systems thanks to Proposition \ref{prop sys lin pas carre} are also in $O(\delta^2)$ operations at precision $O(\pi^N)$.
There is at most $n \delta$ entrance in this loop thanks to the proof of termination in Proposition \ref{prop correction et terminaison pour FGLM stabilise}. The result is then proved.
\end{proof}

We can recall that the complexity of the classical FGLM algorithm is also in $O(n \delta^3)$ operations over the base field.

\section{Shape position}
\label{section: stabilite shape}

In this Section, we analyse the special variant of FGLM to compute a shape position Gröbner basis. We show that the gain in complexity observed in the classical case is still satisfied in our setting. We can combine this result with that of \cite{Vaccon:2014} to express the loss in precision to compute a shape position Gröbner basis starting from a regular sequence.

\subsection{Grevlex to shape}
To fasten the computation of the multiplication matrices, we use the following notion.
\begin{defn}
$I$ is said to be semi-stable for $x_n$ if for all $x^\alpha$ such that $x^\alpha \in LM(I)$ and $x_n \mid x^\alpha$ we have for all $k \in \llbracket 1, n-1 \rrbracket$ $\frac{x_k}{x_n} x^\alpha \in LM(I).$
\end{defn}

Semi-stability's application is then explained in Proposition 4.15, Theorem 4.16 and Corollary 4.19 of \cite{Huot:13} (see also \cite{Faugere-Huot:2013}) that we recall here:
\begin{prop} \label{prop:Huot} Applying FGLM for a zero-dimensional ideal $I$ starting from a Gröbner basis $G$ of $I$ for grevlex:\\
%%\begin{enumerate}
  \texttt{1.} $T_i 1$ ($i<n$) can be read from $G$ and requires no arithmetic operation;\\
  \texttt{2.} If $I$ is semi-stable for $x_n,$ $T_n$ can be read from $G$ and requires no arithmetic operation;\\
  \texttt{3.} After a generic change of variable, $I$ is semi-stable for $x_n.$
%%\end{enumerate}
\end{prop}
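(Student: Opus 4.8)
The three claims of Proposition~\ref{prop:Huot} are essentially bookkeeping statements about how the multiplication matrices $T_i$ are read off a grevlex Gröbner basis, so my plan is to unwind Algorithm~\ref{calcul des matrices de multiplication} in this particular situation and check that the \textbf{Else} branch (the one that actually does arithmetic) is never entered. First I would recall that for a zero-dimensional ideal $I$, the column of $T_i$ indexed by a monomial $v=u/x_i$ of the canonical basis $B_\leq$ encodes $NF_\leq(x_i v)=NF_\leq(u)$, and that the only work in Algorithm~\ref{calcul des matrices de multiplication} happens when $u\notin\mathscr E_\leq(I)$ and $u$ is not itself a leading monomial of $G$; in the two easy cases the column is either a standard basis vector or minus the tail of a Gröbner basis element, hence read directly from $G$ with no operation in $K$.

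\textbf{Step 1 (claim 1).} For $T_i 1$ with $i<n$, the relevant monomial is $u=x_i$. I would argue that under grevlex, for each variable $x_i$ with $i<n$, either $x_i\in B_\leq$ (and then the column of $T_i$ indexed by $1$ is the standard vector $\epsilon_{x_i}$), or $x_i=LM(g)$ for some $g\in G$ (and then the column is $-{}^t(a_1,\dots,a_\delta)$ from $g=x_i+\sum a_k\epsilon_k$). The point is that a variable $x_i$ is a monomial of degree $1$, so it cannot be a \emph{proper} multiple of any monomial other than $1$ itself; hence it can never land in the \textbf{Else} branch, which requires $u$ to be a non-minimal element of $LM(I)$. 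This gives claim 1.

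\textbf{Step 2 (claim 2).} For $T_n$, the relevant monomials are $u=x_n\epsilon$ for $\epsilon\in B_\leq$. Again the \textbf{Else} branch is entered exactly when $u\in LM(I)$ but $u$ is not a \emph{minimal} generator of $LM(I)$, i.e. $u=x_j v$ with $v\in LM(I)$ for some $j$; for grevlex, if $x_j\mid u$ one takes $j$ smallest and in our situation $x_n$ is the largest variable, so when $u=x_n\epsilon\in LM(I)$ the smallest variable dividing $u$ is some $x_j$ with $j<n$ unless $u=x_n^a$ is a pure power. Here semi-stability for $x_n$ enters: I claim it forces every $u=x_n\epsilon\in LM(I)$ to already be a minimal generator of $LM(I)$, equivalently $u/x_k\notin LM(I)$ is \emph{false} in a controlled way — more precisely, semi-stability says $\tfrac{x_k}{x_n}x^\alpha\in LM(I)$ whenever $x_n\mid x^\alpha\in LM(I)$, and I would use this together with the reducedness of $G$ to show that for each such $u$ the algorithm finds $u=LM(g)$ for some $g\in G$ directly, so the \textbf{ElseIf} branch fires and no arithmetic is needed. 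This is the step I expect to be the main obstacle: translating the defining inclusion of semi-stability into the precise statement ``every $x_n$-multiple of a standard monomial that lies in $LM(I)$ is a minimal generator of $LM(I)$, and hence a leading term appearing in $G$'' requires a careful monomial-combinatorics argument, and one must be honest about the pure-power case $u=x_n^a$.

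\textbf{Step 3 (claim 3).} For the generic change of variables, I would invoke the standard fact (as in \cite{Huot:13, Faugere-Huot:2013}) that after a generic linear change of coordinates the ideal is in \emph{Borel-fixed}/general position with respect to grevlex, and in particular its initial ideal is strongly stable; strong stability immediately implies semi-stability for $x_n$ since the latter is just one instance of the exchange property $x_j(x^\alpha/x_n)\in LM(I)$. I would cite the genericity statement rather than reprove it, remarking only that the exceptional locus is a proper Zariski-closed subset, so a generic change of variables avoids it. Assembling Steps 1--3 yields Proposition~\ref{prop:Huot}.
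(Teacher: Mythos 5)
The first thing to note is that the paper does not prove this proposition at all: it is recalled from Proposition 4.15, Theorem 4.16 and Corollary 4.19 of \cite{Huot:13} (see also \cite{Faugere-Huot:2013}), so your direct verification by unwinding Algorithm \ref{calcul des matrices de multiplication} is a different, self-contained route rather than a reconstruction of a proof in the paper. Your Step 1 is correct, and in fact uses nothing about grevlex: a variable $x_i$ lying in $LM(I)$ has no proper divisor other than $1$, hence is a minimal generator of $LM(I)$ and equals $LM(g)$ for some $g$ in the reduced basis, while otherwise $NF_\leq(x_i)=x_i$; either way the column indexed by $1$ is read off. Step 3 by citation is consistent with what the paper itself does; the only caveat is that your intermediate reduction via strong stability of the generic initial ideal is a characteristic-zero fact, whereas the paper also allows $K=\mathbb{F}_q[[X]]$, so citing the semi-stability statement of \cite{Huot:13} directly is the safer move.

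The one genuine gap is Step 2, where you assert the key claim (every $u=x_n\epsilon\in LM(I)$ with $\epsilon\in B_\leq$ is a minimal generator of $LM(I)$) but leave it as ``the main obstacle''. It is actually a two-line consequence of the definition. Suppose $x_k\mid u$ with $k<n$ and $u/x_k\in LM(I)$. Since $u=x_n\epsilon$ and $k\neq n$, the variable $x_k$ divides $\epsilon$, so $x_n$ divides $u/x_k$; applying semi-stability to $x^\alpha=u/x_k$ gives $\frac{x_k}{x_n}\cdot\frac{u}{x_k}=u/x_n=\epsilon\in LM(I)$, contradicting $\epsilon\in B_\leq$. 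Together with $u/x_n=\epsilon\notin LM(I)$, this shows no quotient of $u$ by a variable lies in $LM(I)$, i.e. $u$ is a minimal generator, hence $u=LM(g)$ for some $g$ in the reduced Gröbner basis, and the corresponding column of $T_n$ is $-{}^t(a_1,\dots,a_\delta)$ read from $g$ via the \textbf{ElseIf} branch, with no arithmetic. The pure-power case $u=x_n^a$ that worried you needs no separate treatment ($x_n$ is then the only variable dividing $u$, and $u/x_n=\epsilon\in B_\leq$), and your digression about which variable the \textbf{Else} branch would pick is moot once one knows that branch is never reached. With this argument inserted, your proof is complete.
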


The FGLM algorithm can then be adapted to this setting in the special case of the computation of a 
Gröbner basis of an ideal in shape position, with Algorithm \ref{FGLM stabilise avec shape depuis grevlex}.

\begin{rmk}
If the ideal $I$ is weakly grevlex (or the initial polynomials satisfy the more restrictive \textbf{H2} of \cite{Vaccon:2014}), then $I$ is semi-stable for $x_n.$
\end{rmk}

The remaining of this Section is then devoted to the proof of Theorem \ref{thm:FGLMp:grevlex:lex}.

%Nous pouvons maintenant énoncer le théorème suivant concernant le calcul d'une base de Gröbner par FGLM pour un idéal en position générale :
%
%\begin{theo} \label{thm FGLM shape stabilise}
%Soit $G_1$ une base de Gröbner approchée réduite pour un ordre monomial $\leq$ d'un idéal $I \subset A$ de dimension zéro et de degré $\delta$.  Soit $B_\leq$ la base canonique de $A/I$ pour $\leq.$ Supposons que les coefficients des polynômes de $G_1$ soient tous connus à précision $O(\pi^N)$ pour un certain $N \in \mathbb{N}^*,$ hormis les coefficients de têtes qui valent exactement $1$.
%Notons $m=cond_{\leq, lex}(I)$ et supposons que $m <N$.
%Supposons enfin que $I$ est en position générale. Alors l'Algorithme \ref{FGLM stabilise avec shape} calcule une représentation univariée de $I$. De plus, ses coefficients sont connus à précision $O(\pi^{N-2m})$. La complexité est en $O(n \delta^3)$ opérations dans $K$ (à cause du calcul des matrices de multiplication).
%Si $\leq$ est grevlex et que l'on utilise l'Algorithme \ref{FGLM stabilise avec shape depuis grevlex}, nous avons le même résultat, pour une complexité en $O(\delta^3) + O(n \delta^2)$.
%\end{theo}
%
%La Sous-Section suivante est consacrée à prouver ce théorème.

\subsection{Correction, termination and precision}

We begin by proving correction and termination of this algorithm.

\begin{prop}
We assume that the coefficients of the polynomials of the reduced Gröbner basis $G_1$ for grevlex are known up to a big enough precision, and that the ideal $I=\left\langle G_1 \right\rangle$ is in general position and semi-stable for $x_n$. Then Algorithm \ref{FGLM stabilise avec shape depuis grevlex}  terminates and returns a Gröbner basis for lex of $I$, yielding an univariate representation. Time complexity is in $O(\delta^3) + O(n \delta^2)$.
\end{prop}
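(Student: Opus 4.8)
The plan is to follow the template of the proof of Proposition~\ref{prop correction et terminaison pour FGLM stabilise}, exploiting the rigidity of the shape-position setting to make each step explicit. First I would record the form of the objects involved. Since $I$ is zero-dimensional of degree $\delta$ and in general position, the class of $x_n$ is a primitive element of $A/I$, so its minimal polynomial has degree $\delta$; hence the canonical basis of $A/I$ for lex is $B_{\mathrm{lex}}=\{1,x_n,\dots,x_n^{\delta-1}\}$ and the reduced lex Gröbner basis has the univariate (shape) form $\{\,x_n^\delta-\textstyle\sum_{k<\delta}c_k x_n^k\,\}\cup\{\,x_i-g_i(x_n):1\le i<n\,\}$ with $\deg g_i<\delta$. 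Consequently the only data the algorithm must produce are the normal forms $NF_{\leq}(x_n^k)$ for $0\le k\le\delta$ and $NF_{\leq}(x_i)$ for $i<n$ (written in the grevlex basis $B_\leq$), together with the solutions of the $n$ linear systems $\mathbf v\,w=NF_{\leq}(x_n^\delta)$ and $\mathbf v\,w^{(i)}=NF_{\leq}(x_i)$, where $\mathbf v\in M_\delta(K)$ has columns $NF_{\leq}(x_n^0),\dots,NF_{\leq}(x_n^{\delta-1})$ and is invertible by general position.

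Second, I would establish termination and correctness. By Proposition~\ref{prop:Huot}, semi-stability of $I$ for $x_n$ guarantees that all multiplication matrices $T_1,\dots,T_n$ are read directly off $G_1$ with no arithmetic operation; in particular they are known to precision $O(\pi^N)$ with coefficients of valuation at least $\beta$, and $NF_{\leq}(x_i)=T_i\mathbf 1$ for $i<n$ is obtained for free. The algorithm then performs a fixed deterministic sequence of steps --- $\delta$ multiplications by $T_n$ to build $\mathbf 1,T_n\mathbf 1,\dots,T_n^\delta\mathbf 1$, one iterated approximate SNF of $\mathbf v$ followed by its precising (Algorithms~\ref{algo approx snf} and~\ref{algo snf precisee}), and $n$ membership-based linear solves (Proposition~\ref{prop sys lin pas carre}) --- so it terminates. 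For correctness I would reuse the argument of Proposition~\ref{prop correction et terminaison pour FGLM stabilise}: over infinite precision this is exactly the classical shape variant of FGLM, which is correct; at finite precision, approximate linear dependence is treated as genuine dependence, Lemma~\ref{lem snf iteree} bounds the condition number of the running matrix $\mathbf v$ by $cond_{\leq,\leq_2}(I)$ throughout, and for $N$ large enough relative to $cond_{\leq,\leq_2}(I)$ and to $\beta\delta$ (the latter controlling the precision lost while forming the iterated images of $T_n$, whose coefficients have valuation at least $\beta$) every independence and dependence detection coincides with the exact one and the membership hypotheses $NF_{\leq}(x_n^\delta),NF_{\leq}(x_i)\in\mathrm{Im}(\mathbf v)$ of Proposition~\ref{prop sys lin pas carre} hold. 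The terminal test $\mathrm{card}(B_2)=\delta$ then certifies $B_2=B_{\mathrm{lex}}$, so the $n$ polynomials produced lie in $I$ and form the reduced lex Gröbner basis, which is the sought univariate representation.

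Third, for the complexity I would argue as follows. Reading the $T_i$ off $G_1$ costs no arithmetic. Building $\mathbf 1,T_n\mathbf 1,\dots,T_n^\delta\mathbf 1$ uses $\delta$ matrix--vector products with a $\delta\times\delta$ matrix, i.e.\ $O(\delta^3)$ operations at precision $O(\pi^N)$; computing an approximate SNF of $\mathbf v$ together with the inverse factors $P^{-1},Q^{-1}$ and precising it costs $O(\delta^3)$ by the SNF complexity estimate of Section~\ref{sec:SNF} and Proposition~\ref{prop snf inverses}. Once the SNF of $\mathbf v$ and its inverse factors are available, each of the $n$ systems is solved by two matrix--vector products and $\delta$ divisions, i.e.\ $O(\delta^2)$ operations (Algorithm~\ref{algo snf precisee} together with the solve of Proposition~\ref{prop sys lin pas carre}), for a total of $O(n\delta^2)$; the $NF_{\leq}(x_i)=T_i\mathbf 1$ cost nothing extra. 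Summing yields $O(\delta^3)+O(n\delta^2)$.

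Finally, I expect the main obstacle to be the bookkeeping in the second step: carefully re-running the loop-invariant argument of Proposition~\ref{prop correction et terminaison pour FGLM stabilise} for the specialized loop of Algorithm~\ref{FGLM stabilise avec shape depuis grevlex} and pinning down what ``big enough precision'' means. This needs only tools already available (Lemma~\ref{lem snf iteree}, Propositions~\ref{prop sys lin pas carre} and~\ref{prop:Huot}), and the corresponding explicit precision bound is precisely the content of Theorem~\ref{thm:FGLMp:grevlex:lex}; the complexity count is then routine given the SNF toolbox of Section~\ref{sec:SNF}.
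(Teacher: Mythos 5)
Your proposal is correct and follows essentially the same route as the paper's (much terser) proof: semi-stability plus Proposition~\ref{prop:Huot} make $T_n$ and the $T_i\mathbf 1$ free, the rank-$\delta$ test on $M$ certifies the univariate representation (hence correctness, with termination immediate since the algorithm is a fixed finite sequence of steps), and the cost splits as $O(\delta^3)$ for the powers of $T_n$ and the single SNF plus $O(n\delta^2)$ for the $n$ solves. Only cosmetic slips: the final test in Algorithm~\ref{FGLM stabilise avec shape depuis grevlex} is $rank(M)=\delta$ rather than $card(B_2)=\delta$, and since $M$ is square and invertible the solves go through Theorem~\ref{thm sys lin} (no iterated SNF or Lemma~\ref{lem snf iteree} is needed here).
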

\begin{proof}
As soon as one can certify that the rank of $M$ is $\delta$, the dimension of $A/I$, then we can certify that $I$ possesses an univariate representation.
Correction, termination are then clear. Computing $T_n$ and the $T_i 1$ is free, computing the SNF is in $O(\delta^3)$ and solving the linear systems is in $O(n \delta^2),$ hence the complexity is clear.
\end{proof}

What remains to be analysed is the loss in precision. To that intent, we use again the condition number of $I$ (from grevlex to lex) and the smallest valuation of a coefficient of $G_1.$

\begin{prop} \label{prop: prec of shape from grevlex}
Let $G_1$ be the reduced Gröbner basis for grevlex of some zero-dimensional ideal $I \subset A$ of degree $\delta$. We assume that the coefficients of the polynomials of $G_1$ are known up to precision $O(\pi^N)$ for some $N \in \mathbb{Z}_{>0},$ except the leading coefficients, which are exactly equal to $1$. Let $\beta$ be the smallest valuation of a coefficient of $G_1.$
Let $m=cond_{grevlex, lex}(I).$ We assume that $m-\delta \beta <N$, that $I$ is in shape position and semi-stable for $x_n$.
Then Algorithm \ref{FGLM stabilise avec shape depuis grevlex} computes a Gröbner basis $(x_1-h_1,\dots,x_{n-1}-h_{n-1}, h_n)$ of $I$ for lex which is in shape position. Its coefficients are known up to precision $O(\pi^{N-2m+\delta \beta})$. The valuation of the coefficients of $h_n$ is at least $\beta \delta-m,$ and those of the $h_i$'s is at least $\beta-m.$
\end{prop}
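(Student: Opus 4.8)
The plan is to track the precision through Algorithm \ref{FGLM stabilise avec shape depuis grevlex} using the same bookkeeping as in the general case (Subsubsection \ref{subsubsec:analysis-loss-fglm}), but exploiting Proposition \ref{prop:Huot} to remove the expensive intermediate multiplications. First I would observe that, since $I$ is semi-stable for $x_n$ and the $G_1$ is reduced with leading coefficients exactly $1$, the multiplication matrix $T_n$ and the vectors $T_i 1$ ($i<n$) can be read off directly from $G_1$ by point \texttt{1} and \texttt{2} of Proposition \ref{prop:Huot}; hence no arithmetic is performed to build them and their coefficients retain precision $O(\pi^N)$, with valuations at least $\beta$. This already removes the $(n\delta)^2\beta$ loss that appeared in the general analysis from computing the $T_i$'s.

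Next I would analyse the loop. In the shape-position variant the only monomials tested for $\leq_2 = \mathrm{lex}$ are the powers $1, x_n, x_n^2, \dots$, so the vector $v$ handled at step $d$ is $NF_{\mathrm{grevlex}}(x_n^d) = T_n^d \mathbf{1}$, obtained as $d \le \delta$ successive multiplications by $T_n$. Since the entries of $T_n$ have valuation at least $\beta$, a multiplication by $T_n$ costs at most $\beta$ in valuation offset and the loss in precision at each such product is controlled as in the general case; after at most $\delta$ products the accumulated loss on the coefficients of $\mathbf{v}$ (whose columns are the $NF_{\mathrm{grevlex}}(x_n^j)$) and on the vector $v$ is at most $\delta\beta$, so these are known to precision $O(\pi^{N-\delta\beta})$, and their valuations are at least $\delta\beta$ when $d$ reaches $\delta$ — this is where the $\beta\delta - m$ bound on the valuation of the coefficients of $h_n$ comes from, the $-m$ being the correction from dividing by an invariant factor of $\mathbf{v}$ via Proposition \ref{prop sys lin pas carre}. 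The final relation $h_n = x_n^\delta - \sum W_l B_2[l]$ is produced by solving $\mathbf{v}W = v$ with the membership assumption; by Proposition \ref{prop sys lin pas carre} and Lemma \ref{lem snf iteree} (which gives $cond(\mathbf{v}) \le cond_{\mathrm{grevlex},\mathrm{lex}}(I) = m$) this costs $2m$ in precision starting from input precision $O(\pi^{N-\delta\beta})$, yielding $O(\pi^{N-\delta\beta-2m})$ — but since the correct normalization is to compare against $N$, one rewrites this as $O(\pi^{N-2m+\delta\beta})$ after accounting that the genuine input precision on the relevant columns is $O(\pi^N)$ and only the $x_n^\delta$-column carries the extra $\delta\beta$ factor; I would spell out this accounting carefully. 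The polynomials $x_i - h_i$ come from one further matrix-vector solve of the same shape, so they satisfy the same precision bound with valuation at least $\beta - m$.

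Then I would assemble the correctness and termination from the previous proposition (certification that $\mathrm{rank}\,M = \delta$, which by the shape-position hypothesis gives the univariate representation $(x_1-h_1,\dots,x_{n-1}-h_{n-1},h_n)$), so that the only new content here is the precision arithmetic. The main obstacle I expect is the precise reconciliation of the valuation offsets: one must be careful to distinguish the \emph{absolute} precision $O(\pi^N)$ on the coefficients from the \emph{relative} precision, because the coefficients of $NF_{\mathrm{grevlex}}(x_n^\delta)$ have high valuation ($\ge \delta\beta$) and so multiplying by $T_n$ shifts the window; getting the final exponent to be exactly $N - 2m + \delta\beta$ (rather than $N - 2m - \delta\beta$) hinges on correctly tracking that the large valuation of these coefficients is a \emph{gain}, not a loss, which must be combined with the $2m$ loss from the linear solve and the fact that $m \le \delta\beta$ is not assumed (only $m - \delta\beta < N$). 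Once that ledger is set up correctly, invoking Lemma \ref{lem snf iteree}, Proposition \ref{prop sys lin pas carre} and Proposition \ref{prop:Huot} closes the argument.
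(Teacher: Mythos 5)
Your overall route is the paper's: read $T_n$ and the $T_i\mathbf{1}$ for free from $G_1$ (semi-stability, Proposition \ref{prop:Huot}), form the powers $NF_{grevlex}(x_n^d)=T_n^d\mathbf{1}$, and pay $2m$ in the SNF-based linear solves. But the one piece of content this proposition actually carries --- the exponent $N-2m+\delta\beta$ --- is exactly where your ledger goes wrong, and the fix you propose does not work. The resolution is simply that $\beta$ is a \emph{signed} quantity and is $\leq 0$ here: since the leading coefficients of the reduced basis are exactly $1$, the smallest valuation of a coefficient of $G_1$ is at most $0$, so the entries of $T_n$ may have negative valuation. Each multiplication by $T_n$ then costs $-\beta=|\beta|$ in absolute precision and \emph{lowers} the valuations: $\mathbf{z}[d]$ has coefficients of valuation at least $d\beta$ and absolute precision $N+(d-1)\beta$, so the columns of $M$ are known to at least $O(\pi^{N+(\delta-1)\beta})$ and $\mathbf{z}[\delta]$ to $O(\pi^{N+\delta\beta})$; Theorem \ref{thm sys lin} applied to the (square, invertible, by shape position and $\mathrm{rank}\,M=\delta$) system then gives precision $N+\delta\beta-2m$, which \emph{is} the stated $N-2m+\delta\beta$, with valuations $\geq\delta\beta-m$ (resp.\ $\beta-m$ for the $h_i$, since $\mathbf{y}[i]$ has valuation $\geq\beta$ and precision $N$). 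Your version treats $\beta$ as a positive loss, obtains $N-\delta\beta-2m$, notices it disagrees with the statement, and then tries to repair it by claiming that ``the large valuation of these coefficients is a gain'' and that ``only the $x_n^\delta$-column carries the extra $\delta\beta$ factor''. Both claims are false in the relevant regime: the valuations of the $\mathbf{z}[d]$ decrease (they are bounded \emph{below} by $d\beta\leq 0$, not large), and every column $\mathbf{z}[d]$ with $d\geq 2$ has already lost $(d-1)|\beta|$ digits, not just the last one. So the ``careful accounting'' you defer is precisely the missing step, and as sketched it would produce the wrong bound.

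A secondary, non-fatal point: you describe the computation as the iterated-SNF loop of Algorithm \ref{FGLM stabilise}, invoking Lemma \ref{lem snf iteree} and Proposition \ref{prop sys lin pas carre} to bound $cond(\mathbf{v})$ by $m$. Algorithm \ref{FGLM stabilise avec shape depuis grevlex} does not iterate: it builds the full $\delta\times\delta$ matrix $M=Mat_{B_{grevlex}}(\mathbf{z}[0],\dots,\mathbf{z}[\delta-1])$, computes its SNF once, checks $\mathrm{rank}\,M=\delta$, and solves $n$ square systems via Theorem \ref{thm sys lin}. Under the shape-position hypothesis $B_{lex}=\{1,x_n,\dots,x_n^{\delta-1}\}$, so $cond(M)=cond_{grevlex,lex}(I)=m$ directly from Definition \ref{defn cond FGLM}; no monotonicity lemma is needed.
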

\begin{proof}
There is no loss in precision for the computation concerning the multiplication
matrices since it only involves reading coefficients on $G_1$. Their
coefficients are of valuation at least $\beta.$ The columns of
$M:=Mat_{B_{grevlex}}(NF_\leq(1),\dots,$ $NF_\leq (x_n^{\delta-1}))$ are obtained
using $T_n.$ Their coefficients are known up to precision $O(\pi^{N+(\delta-1)
  \beta})$ and are of valuation at least $(\delta-1) \beta.$ For
$\mathbf{z}[\delta]$, it is $O(\pi^{N+\delta \beta})$ and $\delta \beta.$ The
only remaining step to analyse is then the solving of linear systems, which is
clear thanks to Theorem \ref{thm sys lin}.
\end{proof}

\subsection{Summary on shape position}

Thanks to the results of \cite{Vaccon:2014} and \cite{Vaccon-these},
 we can express the loss in precision to compute a Gröbner basis
 in shape position
 %roots over $\mathbb{Z}_p$ 
 under some genericity assumptions.
Let $F=(f_1,\dots,f_n) \in R[X_1,\dots, X_n]$ be a sequence of polynomials 
satisfying the hypotheses \textbf{H1} and \textbf{H2} of \cite{Vaccon:2014} for grevlex.
Let $D$ be the Macaulay bound of $F$ and $I=\left\langle F \right\rangle.$ 
We assume that $I$ is strongly stable for $x_n.$ Let $\delta = \deg (I).$
Let $\beta=-prec_{MF5}(F, D, grevlex)$ be the bound on loss in precision to compute 
an approximate 
grevlex Gröbner basis of \cite{Vaccon:2014}.
Let $\gamma=-\delta \beta+2 cond_{grevlex,lex}(I)$.

\begin{theo}
If the coefficients of the $f_i$'s are known up to precision $N>\gamma$, then 
%elements of $V(I) \cap R^n$ are determined and can be computed at precision\footnote{A similar result could be stated for the other roots, taking into account their valuation in the formula.} 
one can compute a shape position Gröbner basis for $I$ with precision $N-\gamma$ on its coefficients.
\end{theo}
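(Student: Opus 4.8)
The plan is to combine the loss-of-precision estimate for the grevlex Gröbner basis computation of \cite{Vaccon:2014} with the loss-of-precision estimate for the grevlex-to-lex change of ordering established in Proposition~\ref{prop: prec of shape from grevlex}, composing the two in the natural way. First I would record the input: the coefficients of the $f_i$'s are known up to precision $O(\pi^N)$. Applying the MF5-type algorithm of \cite{Vaccon:2014} under hypotheses \textbf{H1} and \textbf{H2}, one obtains an approximate reduced Gröbner basis $G_1$ of $I$ for grevlex whose coefficients are known up to precision $O(\pi^{N-(-\beta)})=O(\pi^{N+\beta})$, where $\beta=-prec_{MF5}(F,D,grevlex)$ is (the negative of) that loss; note $\beta \le 0$ here, so this is genuinely a loss. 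Crucially, the leading coefficients of $G_1$ are exactly $1$ (the basis is reduced), so $G_1$ meets the hypotheses of Proposition~\ref{prop: prec of shape from grevlex}, with $N$ there replaced by $N+\beta$ and with the smallest valuation of a coefficient of $G_1$ being at least $\beta$ in the sense used in that proposition.

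Next I would feed $G_1$ into Algorithm~\ref{FGLM stabilise avec shape depuis grevlex}. By Proposition~\ref{prop: prec of shape from grevlex}, provided $m-\delta\beta < N+\beta$ — which is implied by the hypothesis $N>\gamma=-\delta\beta+2cond_{grevlex,lex}(I)$ since $m=cond_{grevlex,lex}(I)$ and $2m>m$ — the algorithm returns a shape position lex Gröbner basis whose coefficients are known up to precision $O(\pi^{(N+\beta)-2m+\delta\beta})$. It remains to simplify the exponent: $(N+\beta)-2m+\delta\beta = N - (-\delta\beta - \beta + 2m)$. Here I would need to reconcile the bookkeeping: the statement claims precision $N-\gamma$ with $\gamma=-\delta\beta+2m$, so the plan is to check that the ``$\beta$'' of Proposition~\ref{prop: prec of shape from grevlex} is in fact the valuation bound coming directly from \cite{Vaccon:2014} (not an extra loss term), i.e. that $prec_{MF5}$ already accounts both for the precision loss and for the minimal valuation of the resulting Gröbner basis coefficients, so that the two contributions are not double-counted and the composite bound is exactly $N-\gamma$.

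The main obstacle I anticipate is precisely this reconciliation of precision-versus-valuation bookkeeping across the two papers. In Proposition~\ref{prop: prec of shape from grevlex} the symbol $\beta$ denotes the \emph{valuation} of the smallest coefficient of the \emph{already-computed} $G_1$, whereas $-prec_{MF5}(F,D,grevlex)$ is a bound on the \emph{loss of precision} in computing $G_1$ from $F$; these coincide only if one invokes the (standard, and implicitly used) fact that in the MF5 computation the loss of precision on a coefficient is controlled by its valuation, so that a coefficient known to precision $O(\pi^{N+\beta})$ automatically has valuation $\ge \beta$. I would state this identification explicitly, cite \cite{Vaccon:2014,Vaccon-these} for it, and then the arithmetic is immediate. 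The hypothesis $N>\gamma$ is exactly what guarantees $N+\beta>0$ and $m-\delta\beta<N+\beta$, so both sub-algorithms run without an \textbf{Error} exit, and the final precision is $N-\gamma$ as claimed.
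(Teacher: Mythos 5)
Your proposal takes essentially the same route as the paper: obtain the approximate reduced grevlex basis with the precision/valuation bounds of \cite{Vaccon:2014} (precision loss governed by $\beta$, coefficients of valuation at least $\beta$), then feed it into Algorithm~\ref{FGLM stabilise avec shape depuis grevlex} via Proposition~\ref{prop: prec of shape from grevlex} and combine the two estimates, which is exactly the paper's argument. The bookkeeping issue you flag (whether the MF5 loss contributes an extra $\beta$ beyond $N-\gamma$) is real, but the paper resolves it no more explicitly than you do --- its proof simply records the grevlex basis as known at precision $N+2\beta$ with valuation at least $\beta$ and then states the final constants --- so your proposal matches the paper in both structure and level of detail.
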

\begin{proof}
An approximate reduced Gröbner basis of $I$ for grevlex is determined up to precision $N+2\beta$ and its coefficients are of valuation at least $\beta.$
%Thanks to Proposition \ref{prop:Huot}, this is also true for $T_n$ and the $T_i 1$'s.
%As a consequence, the coefficients of $M$ are of valuation at least $(\delta-1)prec_{MF5}(F, D, grevlex)$ and known with precision $N-(\delta+1)prec_{MF5}(F, D, grevlex).$
Thanks to Proposition \ref{prop: prec of shape from grevlex}, the lexicographical Gröbner basis of $I$ is of the form $x_1-h_1(x_n),\dots,$ $x_{n-1}-h_{n-1}(x_n),h_n(x_n).$ Moreover, the coefficients of $h_n$ are of valuation at least $\delta \beta-cond_{grevlex,lex}(I)$ and known at precision
$N-\delta \beta-2cond_{grevlex,lex}(I).$
For the other $h_i$'s, the coefficients are of valuation at least $\beta-cond_{grevlex,lex}(I)$ and precision
$N-\gamma.$
%
%For roots in $R$ of $h_n,$  to precision $N-\gamma,$ and evaluate the $h_i$ in them to obtain the coordinates of elements in $V(I) \cap R^n$ up to precision $N-2\gamma.$
\end{proof}
\begin{rmk}
As a corollary, if $x_n \in R$ is such that  $val(f_n'(x))$ $=0$, then $x_n$ lifts to $x \in V(I),$ known at precision $N-2\gamma.$
\end{rmk}

\begin{algorithm}
%\DontPrintSemicolon

 \SetKwInOut{Input}{Input}\SetKwInOut{Output}{Output}
 \scriptsize

 \Input{An approximate reduced Gröbner basis $G_1$ for grevlex of some ideal $I \subset A$ of  dimension zero and degree $\delta$. $I$ is semi-stable for $x_n$ and in in shape position.}
 \Output{An approximate Gröbner basis $G_2$ of $I$ for $\leq_{lex}$, in shape position, or \textbf{Error} if the precision is not enough.}

Read the multiplication matrix $T_n$ for $I$ and grevlex using $G$\;
$G_2 := \emptyset$ \;
Read the $\mathbf{y}[i]:= T_i 1$'s from $G$ ($1 \leq i <n$) \;
$\mathbf{z}[0]:=1$ \;
\For{$i$ from $1$ to $\delta$}{
Compute $\mathbf{z}[i]= T_n \mathbf{z}[i-1]$ \;}
$M:= Mat_{B_\leq}(\mathbf{z}[0],\dots,\mathbf{z}[\delta-1])$ \;
Compute $\Delta$ the SNF of $M$ with $\Delta = P M Q$ \;
\eIf{ $rank(M)==\delta$}{
\For{$i$ from $1$ to $n-1$}{
Let $U$ s.t. $\mathbf{y}[i]=-M \cdot U$ thanks to $P,Q,\Delta$ and Thm \ref{thm sys lin} \;
 $h_i(T):=\sum_{i=1}^{\delta-1} U[i] T^i $ \;}
Let $U$ s.t. $\mathbf{z}[\delta]=-M \cdot U$ thanks to $P,Q,\Delta$ and Thm \ref{thm sys lin} \;
 $h_n(T):=T^\delta+\sum_{i=1}^{\delta-1} U[i] T^i $ \;
 Return $x_1-h_1(x_n),\dots,x_{n-1}-h_{n-1}(x_n),h_n(x_n)$ \;}
{Return "\textbf{Error}, not enough precision"} 
 \caption{Stabilized FGLM algorithm for an ideal in shape postition starting from grevlex} \label{FGLM stabilise avec shape depuis grevlex}
\end{algorithm}
\vspace{-0.5cm}
\section{Experimental Results}
An implementation in Sage \cite{sage} of the previous algorithms is
available at
\url{http://www2.rikkyo.ac.jp/web/vaccon/fglm.sage}. Since the main
goal of this implementation is the study of precision, it has not been
optimized regarding to time-complexity.  We have applied the main
Matrix-F5 algorithm of \cite{Vaccon:2014} to homogeneous polynomials
of given degrees, with coefficients taken randomly in $\mathbb{Z}_p$
(using the natural Haar measure): $f_1,\dots, f_s,$ of degree $d_1,
\dots, d_s$ in $\mathbb{Z}_p[X_1,\dots, X_s],$ known at precision
$O(p^{150}),$ for grevlex, using the Macaulay bound $D$. We also used
the extension to the affine case of \cite{Vaccon:2014} to handle
affine polynomials with the same setting (we specify this property in
the column aff.). We have then applied our p-adic variant of FGLM
algorithm, specialized for grevlex to lex or not
, on the obtained Gröbner bases to get Gröbner bases for the lex
order.\\
\begin{scriptsize}
\begin{tabular}{|c|c|c|c|c|c|c|c|c|c|}
\hline
$d =$ & $nb_{test}$ & aff. & fast
& D & $p$ %%& $prec$
& max & mean & fail \\  \hline

%% [2,2,2]& 50 & no  & no 
%% & 4 &2%% &150
%% & 18 & 1 & (0,0)\\ \hline
%% [2,3,3]& 50 & no  & no 
%% & 6 &2%% &150
%% & 27 & 1,8 & (0,0)\\ \hline
[3,3,3]& 20 & no  & no 
& 7 &2%% &150
& 21 & 3 & (0,0)\\ \hline
[3,3,4]& 20 & no  & no 
& 8 &2%% &150
& 21 & 3 & (0,0)\\ \hline
[4,4,4]& 20 & no  & no 
& 10 &2%% &150
& 28 & 5.2 & (0,0)\\ \hline \hline
%% [2,2,2]& 50 & yes  & no 
%% & 4 &2%% &150
%% & 66 & 16& (0,0)\\ \hline
%% [2,3,3]& 50 & yes  & no 
%% & 6 &2%% &150
%% & 142 & 50 & (0,2)\\ \hline
[3,3,3]& 20 & yes  & no 
& 7 &2%% &150
& 150 & 78 & (0,0)\\ \hline
[3,3,4]& 20 & yes  & no 
& 8 &2%% &150
& 149 & 92 & (0,5)\\ \hline
[4,4,4]& 20 & yes  & no 
& 10 &2%% &150
& 150 & 118 & (0,11)\\ \hline \hline
%% [2,2,2]& 50 & yes  & yes 
%% & 4 &2%% &150
%% & 42 & 14& (0,0)\\ \hline
%% [2,3,3]& 50 & yes  & yes 
%% & 6 &2%% &150
%% & 148 & 48 & (0,1)\\ \hline
[3,3,3]& 20 & yes  & yes 
& 7 &2%% &150
& 145 & 65 & (0,1)\\ \hline
[3,3,4]& 20 & yes  & yes 
& 8 &2%% &150
& 150 & 89 & (0,7)\\ \hline
[4,4,4]& 20 & yes  & yes 
& 10 &2%% &150
& 156 & 124 & (0,15)\\ \hline \hline \hline \hline %%on passe a 7
%% %% [2,2,2]& 50 & no  & no 
%% %% & 4 &7%% &150
%% %% & 6 & 0,4 & (0,0)\\ \hline
%% %% [2,3,3]& 50 & no  & no 
%% %% & 6 &7%% &150
%% %% & 6 & 0,4 & (0,0)\\ \hline
%% [3,3,3]& 20 & no  & no 
%% & 7 &7%% &150
%% & 11 & 0,9 & (0,0)\\ \hline
%% [3,3,4]& 20 & no  & no 
%% & 8 &7%% &150
%% & 5 & 0,5 & (0,0)\\ \hline
%% [4,4,4]& 20 & no  & no 
%% & 10 &7%% &150
%% & 8 & 1 & (0,0)\\ \hline \hline
%% %% [2,2,2]& 50 & yes  & no 
%% %% & 4 &7%% &150
%% %% & 21 & 4.5 & (0,0)\\ \hline
%% %% [2,3,3]& 50 & yes  & no 
%% %% & 6 &7%% &150
%% %% & 71 & 12 & (0,0)\\ \hline
%% [3,3,3]& 20 & yes & no 
%% & 7 &7%% &150
%% & 150 & 78 & (0,0)\\ \hline
%% [3,3,4]& 20 & yes  & no 
%% & 8 &7%% &150
%% & 130 & 36 & (0,0)\\ \hline
%% [4,4,4]& 20 & yes  & no 
%% & 10 &7%% &150
%% & 149 & 62 & (0,1)\\ \hline \hline
%% %% [2,2,2]& 50 & yes  & yes 
%% %% & 4 &7%% &150
%% %% & 17 & 2.7 & (0,0)\\ \hline
%% %% [2,3,3]& 50 & yes  & yes 
%% %% & 6 &7%% &150
%% %% & 79 & 13 & (0,0)\\ \hline
%% [3,3,3]& 20 & yes  & yes 
%% & 7 &7%% &150
%% & 100 & 27 & (0,0)\\ \hline
%% [3,3,4]& 20 & yes  & yes 
%% & 8 &7%% &150
%% & 88 & 22 & (0,0)\\ \hline
%% [4,4,4]& 20 & yes  & yes 
%% & 10 &7%% &150
%% & 129 & 47 & (0,2)\\ \hline \hline \hline \hline %on passe a 65519
%% [2,2,2]& 50 & no  & no 
%% & 4 &65519%% &150
%% & 0 & 0 & (0,0)\\ \hline
[3,3,3]& 20 & no  & no 
& 7 &65519%% &150
& 0 & 0 & (0,0)\\ \hline
[4,4,4]& 20 & no  & no 
& 10 &65519%% &150
& 0 & 0 & (0,0)\\ \hline \hline
%% [2,2,2]& 50 & yes  & no 
%% & 4 &65519%% &150
%% & 0 & 0 & (0,0)\\ \hline
[3,3,3]& 20 & yes  & no 
& 7 &65519%% &150
& 0 & 0 & (0,0)\\ \hline
[4,4,4]& 20 & yes  & no 
& 10 &65519%% &150
& 0 & 0 & (0,0)\\ \hline \hline
%% [2,2,2]& 50 & yes  & yes 
%% & 4 &65519%% &150
%% & 0 & 0 & (0,0)\\ \hline
[3,3,3]& 20 & yes  & yes 
& 7 &65519%% &150
& 0 & 0 & (0,0)\\ \hline
[4,4,4]& 20 & yes  & yes 
& 10 &65519%% &150
& 0 & 0 & (0,0)\\ \hline
\end{tabular} 
\end{scriptsize}

This experiment has been realized $nb_{test}$ times for each given
choice of parameters. We have reported in the previous array the maximal
(column max), resp. mean (column mean), loss in precision (in
successful computations), and the number of failures. This last
quantity is given as a couple: the first part is the number of failure
for the Matrix-F5 part and the second for the FGLM part.

We remark that these results suggest a difference of order in the loss
in precision between the affine and the homogeneous case. %% We leave to
%% future works the task to give a quantitative explanation of this
%% fact.
Qualitatively, we remark that, for some given initial degrees,
more computation (particularly computation involving loss in
precision) are done in the affine case, because of the inter-reduction
step. Also, it seems clear that loss in precision decreases when $p$
increases, in particular, on small instances like here, loss in
precision when $p=65519$ are very unlikely.
\vspace{-0.3cm}
\section{Future works}
Following this work, it would be interesting to investigate whether the
sub-cubics algorithms of 
\cite{Faugere-Mou:2011,Faugere-Mou:2013,Mou:2013,Faugere-Huot:2013,Faugere:2014,Huot:13} 
could be adapted to the $p$-adic setting with reasonable loss in precision.
Another possibility of interest for $p$-adic computation would be the 
extension of FGLM to tropical Gröbner bases.
\vspace{-0.3cm}
%Tropical to classical ?
%Mieux que cubique, w ? en conservant une perte de precision raisonnable. Peut-etre des methodes adaptatives.

\scriptsize

\bibliographystyle{alpha}
\bibliography{biblio}

\end{document}